\providecommand{\U}[1]{\protect\rule{.1in}{.1in}}
\newtheorem{theorem}{Theorem}
\newtheorem{corollary}[theorem]{Corollary}
\newtheorem{definition}[theorem]{Definition}
\newtheorem{lemma}[theorem]{Lemma}
\newtheorem{remark}[theorem]{Remark}
\newenvironment{proof}[1][Proof]{\noindent\textbf{#1.} }{\ \rule{0.5em}{0.5em}}
\let\originalleft\left
\let\originalright\right
\renewcommand{\left}{\mathopen{}\mathclose\bgroup\originalleft}
\renewcommand{\right}{\aftergroup\egroup\originalright}
\begin{document}

\title{\textbf{The squashed entanglement of a quantum channel}}
\author{Masahiro Takeoka\thanks{National Institute of Information and Communications
Technology, Koganei, Tokyo 184-8795, Japan} \thanks{Quantum
Information Processing Group, Raytheon BBN Technologies, Cambridge, MA 02138, USA}
\and Saikat Guha\footnotemark[2]
\and Mark M.~Wilde\thanks{Hearne Institute for Theoretical Physics, Department of
Physics and Astronomy, Center for Computation and Technology, Louisiana State
University, Baton Rouge, Louisiana 70803, USA}}
\maketitle

\begin{abstract}
This paper defines the squashed entanglement of a quantum channel as the
maximum squashed entanglement that can be registered by a sender and receiver
at the input and output of a quantum channel, respectively. A new
subadditivity inequality for the original squashed entanglement measure of
Christandl and Winter leads to the conclusion that the squashed entanglement
of a quantum channel is an additive function of a tensor product of any two
quantum channels. More importantly, this new subadditivity inequality, along
with prior results of Christandl, Winter, \textit{et al}., establishes the
squashed entanglement of a quantum channel as an upper bound on the quantum
communication capacity of any channel assisted by unlimited forward and
backward classical communication. A similar proof establishes this quantity as
an upper bound on the private capacity of a quantum channel assisted by
unlimited forward and backward public classical communication. This latter
result is relevant as a limitation on rates achievable in quantum key
distribution. As an important application, we determine that these capacities
can never exceed $\log\left(  \left(  1+\eta\right)  /\left(  1-\eta\right)
\right)  $ for a pure-loss bosonic channel for which a fraction $\eta$ of the
input photons make it to the output on average. The best known lower bound on
these capacities is equal to $\log\left(  1 /\left(  1-\eta\right)  \right)
$. Thus, in the high-loss regime for which $\eta\ll1$, this new upper bound
demonstrates that the protocols corresponding to the above lower bound are
nearly optimal.

\end{abstract}

\section{Introduction}

One of the seminal insights of classical information theory is that public
discussion between two parties trying to communicate privately can enhance
their ability to do so \cite{M93,AC93}. Indeed, let $p_{Y,Z|X}\left(
y,z|x\right)  $ be a stochastic map modeling a broadcast channel that connects
a sender $X$ to a legitimate receiver $Y$ and a wiretapper $Z$. Maurer
\cite{M93} and Ahslwede and Csisz{\'{a}}r \cite{AC93} independently discovered
that the so-called secret-key agreement capacity of such a broadcast
channel,\footnote{Note that the secret-key agreement capacity is equal to the
capacity for private communication with unlimited public discussion, due to
the one-time pad protocol.} in which public discussion is allowed, can be
strictly larger than zero even if the private capacity of the channel is equal
to zero (the capacity for private communication without any public
discussion). This result has shaped the formulation of practical protocols for
secret key agreement.

In later work, Maurer and Wolf introduced the \textit{intrinsic information}
and proved that it is a sharp upper bound on the secret key agreement capacity
\cite{MW99}. It is defined as follows:%
\begin{equation}
\max_{p_{X}\left(  x\right)  }I\left(  X;Y\downarrow Z\right)  ,
\label{eq:intrinsic-info}%
\end{equation}
where $I\left(  X;Y\downarrow Z\right)  $ is equal to a minimization of the
conditional mutual information over all stochastic maps with which an
adversary possessing $Z$ can act to produce $\bar{Z}$:
\begin{equation}
I\left(  X;Y\downarrow Z\right)  \equiv\min_{p_{\bar{Z}|Z}\left(  \bar
{z}|z\right)  }I\left(  X;Y|\bar{Z}\right)  .
\end{equation}
The conditional mutual information is defined as%
\[
I\left(  X;Y|\bar{Z}\right)  \equiv H\left(  X\bar{Z}\right)  +H\left(
Y\bar{Z}\right)  -H\left(  XY\bar{Z}\right)  -H\left(  \bar{Z}\right)  ,
\]
where the Shannon entropies are evaluated with respect to the marginal
distributions resulting from the joint distribution $p_{X}\left(  x\right)
p_{Y,Z|X}\left(  y,z|x\right)  p_{\bar{Z}|Z}\left(  \bar{z}|z\right)  $. The
interpretation of the intrinsic information is that it is a measure of the
correlations that the sender and legitimate receiver can establish, with the
adversary acting in the strongest possible way to reduce these correlations.

Due to strong parallels discovered between secrecy and quantum coherence or
entanglement \cite{SW98,LC99,SP00}, Christandl realized that an extension of
the intrinsic information to quantum information theory might be helpful in
simplifying the arduous task of quantifying entanglement present in quantum
states \cite{C02}. This realization then culminated in the establishment of
the \textit{squashed entanglement} $E_{\text{sq}}\left(  A;B\right)  _{\rho}%
$\ of a bipartite quantum state $\rho_{AB}$\ as an upper bound on the rate at
which two parties can distill Bell states $\left(  \left\vert 0\right\rangle
\left\vert 0\right\rangle +\left\vert 1\right\rangle \left\vert 1\right\rangle
\right)  /\sqrt{2}$ from many copies of $\rho_{AB}$ by performing local
operations and classical communication \cite{CW04}. A similar proof technique
establishes squashed entanglement as an upper bound on distillable secret key
\cite{CEHHOR07}, and this approach has the benefit of being conceptually
simpler than the original classical approaches from \cite{M93,AC93}. The
squashed entanglement $E_{\text{sq}}\left(  A;B\right)  _{\rho}$ is defined as
the following function of a bipartite state $\rho_{AB}$:
\begin{equation}
E_{\text{sq}}\left(  A;B\right)  _{\rho}\equiv\tfrac{1}{2}\inf_{\mathcal{S}%
_{E\rightarrow E^{\prime}}}I\left(  A;B|E^{\prime}\right)  ,
\label{eq:squashed-ent-state}%
\end{equation}
where the conditional quantum mutual information is defined similarly to the
classical one (however with Shannon entropies replaced by von Neumann
entropies) and the infimum is with respect to all noisy \textquotedblleft
squashing channels\textquotedblright\ $\mathcal{S}_{E\rightarrow E^{\prime}}$
taking the $E$ system of a purification $\left\vert \phi^{\rho}\right\rangle
_{ABE}$\ of $\rho_{AB}$ to a system $E^{\prime}$ of arbitrary dimension. In
related work, Tucci has defined a functional bearing some similarities to
squashed entanglement \cite{T99,T02}.

The similarities between (\ref{eq:squashed-ent-state}) and
(\ref{eq:intrinsic-info}) are evident and the interpretations are similar.
That is, we interpret $E_{\text{sq}}\left(  A;B\right)  _{\rho}$ as
quantifying the quantum correlations between $A$ and $B$ after an adversary
possessing the purifying system $E$ performs a quantum channel with the intent
of \textquotedblleft squashing down\textquotedblright\ the correlations that
they share. Due to a lack of an upper bound on the dimension of the output
$E^{\prime}$\ of the squashing channel, it is not yet known whether the
infimization in (\ref{eq:squashed-ent-state}) can be replaced by a
minimization, but note that such a replacement is possible for the classical
intrinsic information \cite{CRW03}. Furthermore, it is not even clear that,
given a description of a density matrix $\rho_{AB}$, the computation of its
squashed entanglement can be performed in non-deterministic polynomial time
(NP). However, these apparent difficulties are not an obstruction to finding
good upper bounds on the distillable entanglement or distillable secret key of
a quantum state $\rho_{AB}$: the results of \cite{CW04,CEHHOR07} are that any
squashing channel gives an upper bound on these quantities and so the task is
to find the best one in a reasonable amount of time. Finally, among the many
entanglement measures, squashed entanglement is the only one known to satisfy
all eight desirable properties that have arisen in the axiomatization of
entanglement theory \cite{CW04,KW04,AF04,BCY11}.

\section{Summary of results}

In this paper, we provide the following contributions:

\begin{enumerate}
\item Our first contribution is to define the squashed entanglement of a
quantum channel $\mathcal{N}_{A^{\prime}\rightarrow B}$ as the maximum
squashed entanglement that can be registered between a sender and receiver who
have access to the input $A^{\prime}$\ and output $B$\ of this channel,
respectively:%
\[
E_{\text{sq}}\left(  \mathcal{N}\right)  \equiv\max_{\left\vert \phi
\right\rangle _{AA^{\prime}}}E_{\text{sq}}\left(  A;B\right)  _{\rho},
\]
where $\rho_{AB}\equiv\mathcal{N}_{A^{\prime}\rightarrow B}(\left\vert
\phi\right\rangle \left\langle \phi\right\vert _{AA^{\prime}})$. The formula
above is formally analogous to the classical formula in
(\ref{eq:intrinsic-info}), and the rest of this paper establishes that
$E_{\text{sq}}\left(  \mathcal{N}\right)  $ plays an analogous operational
role in the quantum setting.

\item One of the main technical contributions of this paper is a proof of a
new subadditivity inequality (Theorem~\ref{thm:subadd}) for the squashed entanglement.

\item This inequality has two important implications. First,
Theorem~\ref{thm:subadd} implies that $E_{\text{sq}}\left(  \mathcal{N}%
\right)  $ is additive as a function of channels, in the sense that
$E_{\text{sq}}\left(  \mathcal{N}\otimes\mathcal{M}\right)  =E_{\text{sq}%
}\left(  \mathcal{N}\right)  +E_{\text{sq}}\left(  \mathcal{M}\right)  $ for
any two channels $\mathcal{N}$ and $\mathcal{M}$.
Thus, the squashed entanglement of a channel is a well behaved function of channels.

\item Next, and more importantly, Theorem~\ref{thm:subadd} is helpful in
establishing $E_{\text{sq}}\left(  \mathcal{N}\right)  $ as an upper bound on
the quantum communication capacity of a channel $\mathcal{N}$ assisted by
unlimited forward and backward classical communication (hereafter denoted as
$Q_{2}(\mathcal{N})$). This new squashed entanglement upper bound is an
improvement upon the best previously known upper bound on $Q_{2}(\mathcal{N})$
given in terms of the entanglement cost of a quantum channel \cite{BBCW13},
with the improvement following from the fact that the squashed entanglement is
never larger than the entanglement of formation \cite{CW04}. In addition to
being tighter, our bound is \textquotedblleft single-letter,\textquotedblright%
\ meaning that it can be evaluated as a function of a single channel use,
whereas the bound from \cite{BBCW13} is regularized, meaning that it is
intractable to evaluate it in spite of being able to write down a formal
mathematical expression for it. By a similar proof, we find that
$E_{\text{sq}}\left(  \mathcal{N}\right)  $ is a single-letter upper bound on
the private capacity of a channel $\mathcal{N}$ assisted by unlimited forward
and backward public classical communication (hereafter denoted as
$P_{2}(\mathcal{N})$). These latter results represent important progress on
one of the longest standing open questions in quantum information theory
\cite{BDS97} (namely, to determine these capacities or sharpen the bounds on them).

\item As examples, we compute upper bounds on $Q_{2}$ and $P_{2}$ for all
qubit Pauli channels, and we consider the special cases of a qubit dephasing
channel and a qubit depolarizing channel.

\item Finally, we show that our upper bound on $Q_{2}({\mathcal{N}}_{\eta})$
and $P_{2}({\mathcal{N}}_{\eta})$ for the pure-loss bosonic channel
${\mathcal{N}}_{\eta}$ with transimissivity $\eta\in\left[  0,1\right]  $, is
very close to the best-known lower bound on these capacities from
\cite{GPLS09,PGBL09}, in the practically-relevant regime of high loss
($\eta\ll1$). This result puts an upper limit on the secret-key rate
achievable by any optical quantum key distribution protocol. We also establish
an upper bound on $Q_{2}$ and $P_{2}$ for all phase-insensitive Gaussian
channels, which includes the thermal and additive noise Gaussian channels.
\end{enumerate}

\section{Properties of the squashed entanglement of a quantum channel}

We begin with our main definition:

\begin{definition}
The squashed entanglement of a quantum channel $\mathcal{N}_{A^{\prime
}\rightarrow B}$ is the maximum squashed entanglement that can be registered
between a sender and receiver who have access to the input $A^{\prime}$\ and
output $B$\ of this channel, respectively:%
\begin{equation}
E_{\operatorname{sq}}\left(  \mathcal{N}\right)  \equiv\max_{\left\vert
\phi\right\rangle _{AA^{\prime}}}E_{\operatorname{sq}}\left(  A;B\right)
_{\rho}, \label{eq:squashed-ent-channel}%
\end{equation}
where $\rho_{AB}\equiv\mathcal{N}_{A^{\prime}\rightarrow B}(\left\vert
\phi\right\rangle \left\langle \phi\right\vert _{AA^{\prime}})$.
\end{definition}

\begin{remark}
We can restrict the optimization in (\ref{eq:squashed-ent-channel}) to be
taken over pure bipartite states rather than mixed ones, due to the convexity
of squashed entanglement (see Proposition~3 of \cite{CW04}). In more detail,
let $\sigma_{AA^{\prime}}$ be a mixed state on systems $A$ and $A^{\prime}$.
Then it has a spectral decomposition of the following form:%
\[
\sigma_{AA^{\prime}}=\sum_{x}p_{X}\left(  x\right)  \left\vert \psi
_{x}\right\rangle \left\langle \psi_{x}\right\vert _{AA^{\prime}}.
\]
Let $\omega_{AB}\equiv\mathcal{N}_{A^{\prime}\rightarrow B}(\sigma
_{AA^{\prime}})$ and $\omega_{AB}^{x}\equiv\mathcal{N}_{A^{\prime}\rightarrow
B}(\left\vert \psi_{x}\right\rangle \left\langle \psi_{x}\right\vert
_{AA^{\prime}})$. Then the following inequality holds, due to convexity of the
squashed entanglement:%
\[
E_{\operatorname{sq}}\left(  A;B\right)  _{\omega}\leq\sum_{x}p_{X}\left(
x\right)  E_{\operatorname{sq}}\left(  A;B\right)  _{\omega^{x}}.
\]
From this, we conclude that for any mixed input state $\sigma_{AA^{\prime}}$,
the following inequality holds%
\[
E_{\operatorname{sq}}\left(  A;B\right)  _{\omega}\leq\max_{\left\vert
\phi\right\rangle _{AA^{\prime}}}E_{\operatorname{sq}}\left(  A;B\right)
_{\rho},
\]
where $\rho_{AB}\equiv\mathcal{N}_{A^{\prime}\rightarrow B}(\left\vert
\phi\right\rangle \left\langle \phi\right\vert _{AA^{\prime}})$, so that it
suffices to optimize over pure bipartite input states.
\end{remark}

\begin{remark}
Note that we can indeed take a maximization (rather than a supremization) over
pure bipartite inputs if the input space is finite-dimensional because in this
case, the input space is compact and the squashed entanglement measure is
continuous \cite{AF04}.
\end{remark}

\begin{lemma}
\label{lem:alt-char-SE}We can alternatively write the squashed entanglement of
a quantum channel as%
\[
\frac{1}{2}\max_{\rho_{A^{\prime}}}\inf_{V_{E\rightarrow E^{\prime}F}}\left[
H\left(  B|E^{\prime}\right)  _{\omega}+H\left(  B|F\right)  _{\omega}\right]
,
\]
where the maximization is over density operators $\rho_{A^{\prime}}$ on the
input system $A^{\prime}$, the infimization is over \textquotedblleft
squashing isometries\textquotedblright\ $V_{E\rightarrow E^{\prime}F}$, and
the entropies are with respect to the state $\omega_{BE^{\prime}F}$, defined
as%
\[
\omega_{BE^{\prime}F}\equiv V_{E\rightarrow E^{\prime}F}\left(  U_{A^{\prime
}\rightarrow BE}^{\mathcal{N}}\left(  \rho_{A^{\prime}}\right)  \right)  ,
\]
with $U_{A^{\prime}\rightarrow BE}^{\mathcal{N}}$ an isometric extension of
the channel $\mathcal{N}_{A^{\prime}\rightarrow B}$.
\end{lemma}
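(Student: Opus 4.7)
The plan is to start from the definition
\[
E_{\operatorname{sq}}(\mathcal{N}) = \max_{|\phi\rangle_{AA'}} \tfrac{1}{2} \inf_{\mathcal{S}_{E\to E'}} I(A;B|E')_{\tau},
\]
where $\tau_{ABE'} = \mathcal{S}_{E\to E'}(U^{\mathcal{N}}_{A'\to BE}(|\phi\rangle\langle\phi|_{AA'}))$, and massage it into the claimed form by three maneuvers: (i) pass from squashing channels to squashing isometries via a Stinespring dilation, (ii) rewrite $I(A;B|E')$ using the purity of the resulting global state, and (iii) observe that the remaining expression depends on the input only through $\rho_{A'}=\operatorname{Tr}_A|\phi\rangle\langle\phi|_{AA'}$.

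For step (i), I would note that for any squashing channel $\mathcal{S}_{E\to E'}$ there is an isometry $V_{E\to E'F}$ with $\mathcal{S}(\cdot)=\operatorname{Tr}_F[V(\cdot)V^\dagger]$, and conversely every isometry gives such a channel. Since the reduced state $\tau_{ABE'}$ is the same whether we apply $\mathcal{S}$ or apply $V$ and trace out $F$, the conditional mutual information $I(A;B|E')$ is unchanged, so the infimum over squashing channels equals the infimum over squashing isometries. Letting $|\omega\rangle_{ABE'F}=V_{E\to E'F}U^{\mathcal{N}}_{A'\to BE}|\phi\rangle_{AA'}$, the state is pure on $ABE'F$.

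Step (ii) is the key identity. For a pure four-party state $|\omega\rangle_{ABE'F}$, the complementary-subsystem relations $H(AE')_\omega=H(BF)_\omega$, $H(BE')_\omega=H(AF)_\omega$, $H(ABE')_\omega=H(F)_\omega$, and $H(E')_\omega=H(ABF)_\omega$ combine to yield
\[
I(A;B|E')_\omega = H(AE')_\omega+H(BE')_\omega-H(ABE')_\omega-H(E')_\omega = H(B|F)_\omega + H(B|E')_\omega,
\]
which I would verify as a one-line calculation.

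Finally, for step (iii), I would observe that the quantities $H(B|E')_\omega$ and $H(B|F)_\omega$ depend only on $\omega_{BE'F}$, which is determined entirely by the input density operator $\rho_{A'}$ and the isometries $U^{\mathcal{N}}_{A'\to BE}$ and $V_{E\to E'F}$ --- the purifying system $A$ is absent from these marginals. Hence, for any purification $|\phi\rangle_{AA'}$ of $\rho_{A'}$, the objective function depends on $|\phi\rangle_{AA'}$ only through $\rho_{A'}$, and since every state $\rho_{A'}$ admits a purification, the maximization over pure bipartite states $|\phi\rangle_{AA'}$ reduces to the maximization over density operators $\rho_{A'}$. Combining these three steps yields the claimed formula. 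The only subtlety I anticipate is keeping the Stinespring-versus-channel equivalence clean at the level of infima (since $F$ can have arbitrary dimension, this is not really an obstacle, just a bookkeeping remark).
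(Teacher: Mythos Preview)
Your proposal is correct and follows essentially the same approach as the paper: pass to a Stinespring dilation of the squashing channel, exploit purity of the resulting four-party state to rewrite $I(A;B|E')$ as $H(B|E')+H(B|F)$, and then observe that only $\rho_{A'}$ matters. The paper's derivation of the entropy identity is a touch more direct---it writes $I(A;B|E')=H(B|E')-H(B|E'A)$ and applies duality once to get $-H(B|E'A)=H(B|F)$---whereas you expand into four entropies and use several complementarity relations, but the content is the same.
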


\begin{proof}
We prove this simply by manipulating the definition in
(\ref{eq:squashed-ent-channel}). Consider a particular pure state $\left\vert
\phi\right\rangle _{AA^{\prime}}$ and a squashing channel $\mathcal{S}%
_{E\rightarrow E^{\prime}}$. Let $\rho_{A^{\prime}}=\ $Tr$_{A}\left\{
\phi_{AA^{\prime}}\right\}  $ and let $V_{E\rightarrow E^{\prime}F}$ be an
isometric extension of the squashing channel $\mathcal{S}_{E\rightarrow
E^{\prime}}$. Let $\omega_{ABE^{\prime}F}\equiv V_{E\rightarrow E^{\prime}%
F}\left(  U_{A^{\prime}\rightarrow BE}^{\mathcal{N}}\left(  \phi_{AA^{\prime}%
}\right)  \right)  $. Then%
\begin{align*}
I\left(  A;B|E^{\prime}\right)  _{\omega}  &  =H\left(  B|E^{\prime}\right)
_{\omega}-H\left(  B|E^{\prime}A\right)  _{\omega}\\
&  =H\left(  B|E^{\prime}\right)  _{\omega}+H\left(  B|F\right)  _{\omega},
\end{align*}
where the first equality is an identity and the second follows from duality of
conditional entropy (i.e., $H\left(  K|L\right)  =-H\left(  K|M\right)  $ for
any pure tripartite state $\psi_{KLM}$). The statement of the lemma then holds
because the above equality holds for any state $\phi_{AA^{\prime}}$ and any
squashing channel $\mathcal{S}_{E\rightarrow E^{\prime}}$.
\end{proof}

\subsection{Concavity in the input density operator}

\begin{lemma}
\label{lem:SE-concave}The squashed entanglement is concave in the input
density operator $\rho_{A^{\prime}}$. That is, the following function (from
Lemma~\ref{lem:alt-char-SE}) is concave as a function of $\rho_{A^{\prime}}$:%
\[
\frac{1}{2}\inf_{V_{E\rightarrow E^{\prime}F}}\left[  H\left(  B|E^{\prime
}\right)  _{\omega}+H\left(  B|F\right)  _{\omega}\right]  ,
\]
where $\omega_{BE^{\prime}F}\equiv V_{E\rightarrow E^{\prime}F}\left(
U_{A^{\prime}\rightarrow BE}^{\mathcal{N}}\left(  \rho_{A^{\prime}}\right)
\right)  $.
\end{lemma}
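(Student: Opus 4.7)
The plan is to show, for each fixed squashing isometry $V_{E\to E'F}$, that the bracket $H(B|E')_\omega + H(B|F)_\omega$ is concave in $\rho_{A'}$, and then to invoke the elementary fact that an infimum of concave functions is concave. Since the definition already isolates $V$ inside the infimum, this reduces the problem to two standard ingredients.

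First, I would note that for each fixed $V$ the map $\rho_{A'}\mapsto \omega_{BE'F}=V_{E\to E'F}\big(U^{\mathcal{N}}_{A'\to BE}(\rho_{A'})\big)$ is affine on the convex set of input density operators, because both $U^{\mathcal{N}}$ and $V$ act on states by $\tau\mapsto K\tau K^{\dagger}$ for fixed isometries $K$, which is a linear superoperator. Next, I would invoke concavity of the quantum conditional entropy: $H(B|E')_\omega$ is a concave function of $\omega_{BE'}$, which follows from strong subadditivity by adjoining a classical flag $X$ recording which component of a convex decomposition $\omega_{BE'}=\sum_x p_x\omega^{x}_{BE'}$ is active and using $H(B|E')\ge H(B|E'X)=\sum_x p_x H(B|E')_{\omega^x}$. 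The same reasoning applies to $H(B|F)_\omega$. Composing these concave functions with the affine map $\rho_{A'}\mapsto\omega_{BE'F}$ preserves concavity, so
\[
g_V(\rho_{A'})\equiv H(B|E')_\omega+H(B|F)_\omega
\]
is concave in $\rho_{A'}$ for every fixed squashing isometry $V$.

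Finally, I would close with the general fact that the pointwise infimum of a family of concave functions is concave: for any density operators $\rho_1,\rho_2$ and $\lambda\in[0,1]$,
\[
\inf_V g_V\big(\lambda\rho_1+(1-\lambda)\rho_2\big)\;\ge\;\inf_V\big[\lambda g_V(\rho_1)+(1-\lambda)g_V(\rho_2)\big]\;\ge\;\lambda\inf_V g_V(\rho_1)+(1-\lambda)\inf_V g_V(\rho_2),
\]
where the first inequality uses concavity of each $g_V$ and the second uses the obvious bound $\inf_V(a_V+b_V)\ge\inf_V a_V+\inf_V b_V$. Dividing by two yields concavity of the function appearing in Lemma~\ref{lem:SE-concave}.

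I do not anticipate a substantive obstacle; the argument is essentially bookkeeping once the two ingredients are identified. The only mild point worth double-checking is that concavity is preserved by the infimum even though the dimension of the squashing output $E'F$ is unbounded; this is immediate because each $g_V$ is defined on the common convex domain of input density operators $\rho_{A'}$, so the set over which the infimum is taken only affects the value, not the domain of concavity.
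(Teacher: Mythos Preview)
Your proposal is correct and follows essentially the same approach as the paper: both establish concavity of $H(B|E')+H(B|F)$ for each fixed squashing isometry via the classical-flag argument (equivalently, ``conditioning cannot increase entropy''), and then pass to the infimum. The paper writes out the chain of inequalities explicitly rather than invoking the abstract fact that an infimum of concave functions is concave, but the underlying logic is identical.
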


\begin{proof}
Let $\rho_{A^{\prime}}=\sum_{x}p_{X}\left(  x\right)  \rho_{A^{\prime}}^{x}$
and let%
\[
\omega_{XBE^{\prime}F}\equiv\sum_{x}p_{X}\left(  x\right)  \left\vert
x\right\rangle \left\langle x\right\vert _{X}\otimes V_{E\rightarrow
E^{\prime}F}\left(  U_{A^{\prime}\rightarrow BE}^{\mathcal{N}}\left(
\rho_{A^{\prime}}^{x}\right)  \right)  .
\]
Let $\omega_{BE^{\prime}F}^{x}\equiv V_{E\rightarrow E^{\prime}F}\left(
U_{A^{\prime}\rightarrow BE}^{\mathcal{N}}\left(  \rho_{A^{\prime}}%
^{x}\right)  \right)  $. Then the statement of the lemma is equivalent to%
\[
\frac{1}{2}\inf_{V_{E\rightarrow E^{\prime}F}}\left[  H\left(  B|E^{\prime
}\right)  _{\omega}+H\left(  B|F\right)  _{\omega}\right]  \geq\frac{1}{2}%
\sum_{x}p_{X}\left(  x\right)  \inf_{V_{E\rightarrow E^{\prime}F}^{x}}\left[
H\left(  B|E^{\prime}\right)  _{\tau^{x}}+H\left(  B|F\right)  _{\tau^{x}%
}\right]  ,
\]
where $\tau_{BE^{\prime}F}^{x}\equiv V_{E\rightarrow E^{\prime}F}^{x}\left(
U_{A^{\prime}\rightarrow BE}^{\mathcal{N}}\left(  \rho_{A^{\prime}}%
^{x}\right)  \right)  $. This follows from concavity of conditional entropy.
That is, consider any state $\omega_{BE^{\prime}F}$ with fixed $\rho
_{A^{\prime}}$ and fixed $U_{A^{\prime}\rightarrow BE}^{\mathcal{N}}$.
Consider the following chain of inequalities:%
\begin{align*}
H\left(  B|E^{\prime}\right)  _{\omega}+H\left(  B|F\right)  _{\omega} &  \geq
H\left(  B|E^{\prime}X\right)  _{\omega}+H\left(  B|FX\right)  _{\omega}\\
&  =\sum_{x}p_{X}\left(  x\right)  \left[  H\left(  B|E^{\prime}\right)
_{\omega^{x}}+H\left(  B|F\right)  _{\omega^{x}}\right]  \\
&  \geq\sum_{x}p_{X}\left(  x\right)  \inf_{V_{E\rightarrow E^{\prime}F}^{x}%
}\left[  H\left(  B|E^{\prime}\right)  _{\tau^{x}}+H\left(  B|F\right)
_{\tau^{x}}\right]  .
\end{align*}
The first inequality follows from \textquotedblleft conditioning cannot
increase entropy\textquotedblright\ (i.e., $H(K|L)\geq H\left(  K|LM\right)  $
for any state on systems $KLM$). The equality is just a rewriting of the
entropies. The last inequality follows merely by taking an infimum over all
squashing isometries corresponding to the individual states $U_{A^{\prime
}\rightarrow BE}^{\mathcal{N}}\left(  \rho_{A^{\prime}}^{x}\right)  $. We can
then conclude the statement of the lemma since the calculation is independent
of which squashing isometry $V_{E\rightarrow E^{\prime}F}$ we begin with
(i.e., it holds for the infimum).
\end{proof}

With almost the same proof (excluding the last inequality above), we obtain
the following:

\begin{corollary}
\label{cor:SE-concave-isometry-fixed}For a fixed squashing isometry
$V_{E\rightarrow E^{\prime}F}$, the following function is concave in the input
density operator $\rho_{A^{\prime}}$:%
\[
\frac{1}{2}\left[  H\left(  B|E^{\prime}\right)  _{\omega}+H\left(
B|F\right)  _{\omega}\right]  ,
\]
where $\omega_{BE^{\prime}F}\equiv V_{E\rightarrow E^{\prime}F}\left(
U_{A^{\prime}\rightarrow BE}^{\mathcal{N}}\left(  \rho_{A^{\prime}}\right)
\right)  $.
\end{corollary}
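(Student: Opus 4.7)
The plan is to mirror the proof of Lemma~\ref{lem:SE-concave} verbatim through the first two steps, and then simply stop: since the squashing isometry is held fixed, there is no infimum to take at the end, and the conditioning inequality already delivers concavity.

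Concretely, write $\rho_{A^{\prime}}=\sum_{x}p_{X}\left(  x\right)  \rho_{A^{\prime}}^{x}$ and introduce the classically-extended state
\[
\omega_{XBE^{\prime}F}\equiv\sum_{x}p_{X}\left(  x\right)  \left\vert x\right\rangle \left\langle x\right\vert _{X}\otimes V_{E\rightarrow E^{\prime}F}\left(  U_{A^{\prime}\rightarrow BE}^{\mathcal{N}}\left(  \rho_{A^{\prime}}^{x}\right)  \right),
\]
with reduced states $\omega_{BE^{\prime}F}^{x}\equiv V_{E\rightarrow E^{\prime}F}( U_{A^{\prime}\rightarrow BE}^{\mathcal{N}}( \rho_{A^{\prime}}^{x}))$. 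Note that the marginal on $BE^{\prime}F$ coincides with $\omega_{BE^{\prime}F}\equiv V_{E\rightarrow E^{\prime}F}(U_{A^{\prime}\rightarrow BE}^{\mathcal{N}}(\rho_{A^{\prime}}))$, because both $U^{\mathcal{N}}_{A^{\prime}\to BE}$ and $V_{E\to E^{\prime}F}$ are linear.

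I would then apply \textquotedblleft conditioning does not increase entropy\textquotedblright\ twice, once for each term, and expand the conditional entropies conditioned on the classical register $X$:
\begin{align*}
H\left(  B|E^{\prime}\right)  _{\omega}+H\left(  B|F\right)  _{\omega} & \geq H\left(  B|E^{\prime}X\right)  _{\omega}+H\left(  B|FX\right)  _{\omega}\\
& =\sum_{x}p_{X}\left(  x\right)  \left[  H\left(  B|E^{\prime}\right)  _{\omega^{x}}+H\left(  B|F\right)  _{\omega^{x}}\right].
\end{align*}
Dividing by $2$ yields exactly the concavity inequality claimed in the corollary, because the right-hand side is the average over $x$ of the functional evaluated at each $\rho_{A^{\prime}}^{x}$ with the same fixed squashing isometry $V_{E\to E^{\prime}F}$.

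There is no real obstacle here: the only subtlety compared to Lemma~\ref{lem:SE-concave} is that we must not take an infimum over squashing isometries in the final step, since the corollary asks about a \emph{fixed} $V_{E\to E^{\prime}F}$. Equivalently, the proof of Lemma~\ref{lem:SE-concave} already establishes this corollary en route, and the corollary is just the observation that the last inequality in that proof is superfluous when $V$ is held fixed.
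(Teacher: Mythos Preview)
Your proposal is correct and matches the paper's own argument essentially verbatim: the paper states that the corollary follows ``with almost the same proof (excluding the last inequality above),'' which is precisely what you do---run the conditioning-does-not-increase-entropy step and the expansion over the classical register $X$, then stop before the infimum.
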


\subsection{Subadditivity inequality}

We now provide a statement and proof of the new subadditivity inequality:

\begin{theorem}
\label{thm:subadd}For any five-party pure state $\psi_{AB_{1}E_{1}B_{2}E_{2}}%
$, the following subadditivity inequality holds%
\begin{equation}
E_{\operatorname{sq}}\left(  A;B_{1}B_{2}\right)  _{\psi} \leq
E_{\operatorname{sq}}\left(  AB_{2}E_{2};B_{1}\right)  _{\psi} +
E_{\operatorname{sq}}\left(  AB_{1}E_{1};B_{2}\right)  _{\psi}.
\label{eq:subadd}%
\end{equation}

\end{theorem}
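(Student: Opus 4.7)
The plan is to use an isometric version of the two-entropy formula that underlies Lemma~\ref{lem:alt-char-SE}, applied to squashed entanglement of states rather than of channels. For any bipartite state $\rho_{AB}$ with purification $\phi_{ABE}$ and any isometric squashing $V_{E\to E'F}$ (the Stinespring dilation of a squashing channel $\mathcal{S}_{E\to E'}$), the output $\omega_{ABE'F}\equiv V(\phi_{ABE})$ is still pure, so duality of conditional entropy gives $I(A;B|E')_\omega = H(B|E')_\omega + H(B|F)_\omega$. Consequently $2\,E_{\operatorname{sq}}(A;B)_\rho$ equals the infimum of $H(B|E')+H(B|F)$ over all isometric squashings of $E$.

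The key observation is that the three squashed entanglements appearing in~(\ref{eq:subadd}) have disjoint purifying systems: $\psi$ purifies the $A;B_1B_2$ split through the joint environment $E_1E_2$, while it purifies $AB_2E_2;B_1$ through $E_1$ alone and $AB_1E_1;B_2$ through $E_2$ alone. For arbitrary isometric squashings $V^1_{E_1\to E_1'F_1}$ and $V^2_{E_2\to E_2'F_2}$ of the two right-hand-side terms, the product $V^1\otimes V^2$ is therefore a legitimate (if generally suboptimal) squashing for the left-hand side. Setting $\omega \equiv (V^1\otimes V^2)(\psi)$, and noting that the marginal on $B_iE_i'F_i$ is unaffected by $V^{3-i}$ since the latter acts on a disjoint system, applying the characterization above to all three squashed entanglements reduces the theorem to the purely entropic statement
\[
H(B_1B_2|E_1'E_2')_\omega + H(B_1B_2|F_1F_2)_\omega \leq H(B_1|E_1')_\omega + H(B_1|F_1)_\omega + H(B_2|E_2')_\omega + H(B_2|F_2)_\omega.
\]
This is enough because the right-hand side separates into terms depending only on $V^1$ and only on $V^2$, so infimizing the two groups independently recovers $2\,E_{\operatorname{sq}}(AB_2E_2;B_1)_\psi + 2\,E_{\operatorname{sq}}(AB_1E_1;B_2)_\psi$.

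The final inequality will follow from two standard facts, each applied to both summands on the left-hand side: subadditivity of conditional entropy, $H(B_1B_2|Z) \leq H(B_1|Z) + H(B_2|Z)$, and the monotonicity form of strong subadditivity, $H(B_i|Z_1Z_2) \leq H(B_i|Z_i)$. Chaining these with $Z = E_1'E_2'$ on one summand and $Z = F_1F_2$ on the other and summing yields the bound. I do not anticipate any serious obstacle: the substantive idea is the use of independent squashings on the disjoint purifiers $E_1,E_2$ together with the two-entropy rewriting, after which the remainder is a short exercise combining subadditivity and strong subadditivity.
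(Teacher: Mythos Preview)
Your proposal is correct and follows essentially the same route as the paper: restrict to product squashings on the disjoint purifiers $E_1,E_2$, rewrite via duality as a sum of conditional entropies, and then apply strong subadditivity to split $H(B_1B_2|\,\cdot\,)$ into single-$B_i$ terms. The one presentational difference is that you carry the Stinespring auxiliaries $F_1,F_2$ explicitly, so your right-hand side $H(B_1|E_1')+H(B_1|F_1)$ already equals $I(AB_2E_2;B_1|E_1')$ on $V^1(\psi)$; the paper instead purifies the post-squashing mixed state with a single system $R$, obtains $I(AB_2E_2';B_1|E_1')$, and then needs an extra data-processing step to replace $E_2'$ by $E_2$---your version absorbs that step into the choice of isometric squashing.
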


\begin{proof}
Let%
\begin{align*}
\tau_{AB_{1}E_{1}^{\prime}B_{2}E_{2}}  &  \equiv\mathcal{S}_{E_{1}\rightarrow
E_{1}^{\prime}}(\psi_{AB_{1}E_{1}B_{2}E_{2}}),\\
\sigma_{AB_{1}E_{1}B_{2}E_{2}^{\prime}}  &  \equiv\mathcal{S}_{E_{2}%
\rightarrow E_{2}^{\prime}}(\psi_{AB_{1}E_{1}B_{2}E_{2}}),\\
\omega_{AB_{1}E_{1}^{\prime}B_{2}E_{2}^{\prime}}  &  \equiv(\mathcal{S}%
_{E_{1}\rightarrow E_{1}^{\prime}}\otimes\mathcal{S}_{E_{2}\rightarrow
E_{2}^{\prime}})(\psi_{AB_{1}E_{1}B_{2}E_{2}}),
\end{align*}
where each $\mathcal{S}_{E_{i}\rightarrow E_{i}^{\prime}}$ is an arbitrary
local squashing channel. Let $\left\vert \phi^{\omega}\right\rangle
_{AB_{1}E_{1}^{\prime}B_{2}E_{2}^{\prime}R}$ be a purification of $\omega$
with purifying system $R$. The inequality in (\ref{eq:subadd}) is a
consequence of the following chain of inequalities:%
\begin{align*}
2E_{\text{sq}}\left(  A;B_{1}B_{2}\right)  _{\psi}  &  \leq I\left(
A;B_{1}B_{2}|E_{1}^{\prime}E_{2}^{\prime}\right)  _{\omega}\\
&  =H\left(  B_{1}B_{2}|E_{1}^{\prime}E_{2}^{\prime}\right)  _{\omega
}-H\left(  B_{1}B_{2}|E_{1}^{\prime}E_{2}^{\prime}A\right)  _{\omega}\\
&  =H\left(  B_{1}B_{2}|E_{1}^{\prime}E_{2}^{\prime}\right)  _{\phi}+H\left(
B_{1}B_{2}|R\right)  _{\phi}\\
&  \leq H\left(  B_{1}|E_{1}^{\prime}\right)  _{\phi}+H\left(  B_{2}%
|E_{2}^{\prime}\right)  _{\phi}+H\left(  B_{1}|R\right)  _{\phi}+H\left(
B_{2}|R\right)  _{\phi}\\
&  =H\left(  B_{1}|E_{1}^{\prime}\right)  _{\omega}-H\left(  B_{1}|AB_{2}%
E_{1}^{\prime}E_{2}^{\prime}\right)  _{\omega}+H\left(  B_{2}|E_{2}^{\prime
}\right)  _{\omega}-H\left(  B_{2}|AB_{1}E_{1}^{\prime}E_{2}^{\prime}\right)
_{\omega}\\
&  =I\left(  AB_{2}E_{2}^{\prime};B_{1}|E_{1}^{\prime}\right)  _{\omega
}+I\left(  AB_{1}E_{1}^{\prime};B_{2}|E_{2}^{\prime}\right)  _{\omega}\\
&  \leq I\left(  AB_{2}E_{2};B_{1}|E_{1}^{\prime}\right)  _{\tau}+I\left(
AB_{1}E_{1};B_{2}|E_{2}^{\prime}\right)  _{\sigma}.
\end{align*}
The first inequality follows from the definition in
(\ref{eq:squashed-ent-state}). The first equality is a rewriting of the
conditional mutual information. The second equality exploits duality of
conditional entropy: for any pure tripartite state on systems $KLM$, the
equality $H\left(  K|L\right)  +H\left(  K|M\right)  =0$ holds. The second
inequality results from several applications of strong subadditivity (SSA) of
quantum entropy (SSA is the statement that $I\left(  K;L|M\right)  \geq0$ for
an arbitrary state on systems $KLM$) \cite{LR73}. The third equality again
exploits duality of conditional entropy and the last equality is just a
rewriting in terms of conditional mutual informations. The final inequality is
a result of a quantum data processing inequality for conditional mutual
information (see the proof of Proposition~3\ of \cite{CW04}). Since the
calculation above is independent of the choice of the maps $\mathcal{S}%
_{E_{i}\rightarrow E_{i}^{\prime}}$, the system $E_{1}$ purifies the state on
$AB_{1}B_{2}E_{2}$, and the system $E_{2}$ purifies the state on $AB_{1}%
B_{2}E_{1}$, the subadditivity inequality in the statement of the theorem follows.
\end{proof}

\subsection{Additivity}

As a simple corollary, we find that the squashed entanglement of a quantum
channel is an additive, and thus well behaved, function of quantum channels.

\begin{corollary}
\label{cor:additivity}For any two quantum channels $\mathcal{N}$ and
$\mathcal{M}$, the following additivity relation holds%
\[
E_{\operatorname{sq}}\left(  \mathcal{N}\otimes\mathcal{M}\right)
=E_{\operatorname{sq}}\left(  \mathcal{N}\right)  +E_{\operatorname{sq}%
}\left(  \mathcal{M}\right)  .
\]

\end{corollary}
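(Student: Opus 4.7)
The plan is to prove additivity by establishing both inequalities separately, with Theorem~\ref{thm:subadd} doing the heavy lifting on the nontrivial direction.

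For the easy direction $E_{\operatorname{sq}}(\mathcal{N}\otimes\mathcal{M})\geq E_{\operatorname{sq}}(\mathcal{N})+E_{\operatorname{sq}}(\mathcal{M})$, I would let $\phi_{AA_1'}$ and $\phi'_{A'A_2'}$ be pure states nearly optimizing $E_{\operatorname{sq}}(\mathcal{N})$ and $E_{\operatorname{sq}}(\mathcal{M})$, respectively, and use $\phi_{AA_1'}\otimes\phi'_{A'A_2'}$ as an input to $\mathcal{N}\otimes\mathcal{M}$. Since the resulting output is a tensor product of bipartite states, and since squashed entanglement is additive on tensor products (Proposition~3 of \cite{CW04}), this input realizes $E_{\operatorname{sq}}(A;B)_\rho+E_{\operatorname{sq}}(A';B')_{\rho'}$ as a lower bound on $E_{\operatorname{sq}}(\mathcal{N}\otimes\mathcal{M})$.

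For the nontrivial direction $E_{\operatorname{sq}}(\mathcal{N}\otimes\mathcal{M})\leq E_{\operatorname{sq}}(\mathcal{N})+E_{\operatorname{sq}}(\mathcal{M})$, I would take a pure optimizer $\phi_{AA_1'A_2'}$ in the definition of $E_{\operatorname{sq}}(\mathcal{N}\otimes\mathcal{M})$ (using the remarks already noted, which allow maximization over pure inputs). Applying the isometric extensions $U^{\mathcal{N}}_{A_1'\to B_1E_1}$ and $U^{\mathcal{M}}_{A_2'\to B_2E_2}$ yields a five-party pure state $\psi_{AB_1E_1B_2E_2}$ satisfying
\[
E_{\operatorname{sq}}(A;B_1B_2)_\psi = E_{\operatorname{sq}}(A;B_1B_2)_{(\mathcal{N}\otimes\mathcal{M})(\phi)}.
\]
Now I would invoke Theorem~\ref{thm:subadd} to bound this by $E_{\operatorname{sq}}(AB_2E_2;B_1)_\psi+E_{\operatorname{sq}}(AB_1E_1;B_2)_\psi$.

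The key remaining step is to recognize each of these two terms as being controlled by the squashed entanglement of the corresponding channel. Here I would use the local-isometric invariance of squashed entanglement: since $U^{\mathcal{M}}_{A_2'\to B_2E_2}$ is an isometry acting on the ``$A$-side'' of the bipartition $(AB_2E_2\,|\,B_1)$, we have
\[
E_{\operatorname{sq}}(AB_2E_2;B_1)_\psi = E_{\operatorname{sq}}(AA_2';B_1)_{\mathcal{N}_{A_1'\to B_1}(\phi_{AA_1'A_2'})}.
\]
The state $\phi_{AA_1'A_2'}$ is a pure state between the combined reference system $AA_2'$ and the channel input $A_1'$, so this quantity is at most $E_{\operatorname{sq}}(\mathcal{N})$ by definition. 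The analogous argument bounds the second term by $E_{\operatorname{sq}}(\mathcal{M})$, and combining yields the claim.

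The main obstacle is ensuring the bookkeeping of systems goes through cleanly: one must justify that $A_2'$ (together with the original reference $A$) is legitimately treated as the ``input reference'' for $\mathcal{N}$, and symmetrically for $\mathcal{M}$. Once the local-isometric invariance of $E_{\operatorname{sq}}$ is applied on the appropriate side, this is routine; the real work has already been done in Theorem~\ref{thm:subadd}.
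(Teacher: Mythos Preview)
Your proposal is correct and follows essentially the same approach as the paper's proof: both directions are handled identically, with the easy direction using tensor-product inputs and additivity of $E_{\operatorname{sq}}$ on product states, and the hard direction applying Theorem~\ref{thm:subadd} to the five-party pure state obtained from the isometric extensions, then using local-isometric invariance to identify each term as bounded by the corresponding channel quantity.
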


\begin{proof}
First, note that the following inequality holds
\[
E_{\text{sq}}\left(  \mathcal{N}\otimes\mathcal{M}\right)  \geq E_{\text{sq}%
}\left(  \mathcal{N}\right)  +E_{\text{sq}}\left(  \mathcal{M}\right)  ,
\]
because the optimization of $E_{\text{sq}}\left(  \mathcal{N}\otimes
\mathcal{M}\right)  $ includes tensor-product input choices as a special case
and the squashed entanglement is additive for tensor-product states
\cite{CW04}, so that $E_{\text{sq}}\left(  \mathcal{N}\otimes\mathcal{M}%
\right)  $ can only be larger than the sum of the individual terms.

The other inequality%
\begin{equation}
E_{\text{sq}}\left(  \mathcal{N}\otimes\mathcal{M}\right)  \leq E_{\text{sq}%
}\left(  \mathcal{N}\right)  +E_{\text{sq}}\left(  \mathcal{M}\right)
\label{eq:channel-subadd}%
\end{equation}
follows from Theorem~\ref{thm:subadd}. Indeed, let $\left\vert \phi
\right\rangle _{AA_{1}A_{2}}$ denote any pure-state input to the tensor
product channel $\mathcal{N}\otimes\mathcal{M}$, so that the output state is
$\left(  \mathcal{N}_{A_{1}\rightarrow B_{1}}\otimes\mathcal{M}_{A_{2}%
\rightarrow B_{2}}\right)  (\left\vert \phi\right\rangle \left\langle
\phi\right\vert _{AA_{1}A_{2}})$. Let $U_{A_{1}\rightarrow B_{1}E_{1}%
}^{\mathcal{N}}$ be an isometric extension of $\mathcal{N}_{A_{1}\rightarrow
B_{1}}$ and let $V_{A_{2}\rightarrow B_{2}E_{2}}^{\mathcal{M}}$ be an
isometric extension of $\mathcal{M}_{A_{2}\rightarrow B_{2}}$. Define the
following states:%
\begin{align*}
\left\vert \psi\right\rangle _{AB_{1}E_{1}A_{2}}  &  \equiv U_{A_{1}%
\rightarrow B_{1}E_{1}}^{\mathcal{N}}\left\vert \phi\right\rangle
_{AA_{1}A_{2}},\\
\left\vert \chi\right\rangle _{AA_{1}B_{2}E_{2}}  &  \equiv V_{A_{2}%
\rightarrow B_{2}E_{2}}^{\mathcal{M}}\left\vert \phi\right\rangle
_{AA_{1}A_{2}},\\
\left\vert \varphi\right\rangle _{AB_{1}E_{1}B_{2}E_{2}}  &  \equiv
U_{A_{1}\rightarrow B_{1}E_{1}}^{\mathcal{N}}\otimes V_{A_{2}\rightarrow
B_{2}E_{2}}^{\mathcal{M}}\left\vert \phi\right\rangle _{AA_{1}A_{2}}.
\end{align*}
For any input state $\left\vert \phi\right\rangle _{AA_{1}A_{2}}$, the
following holds%
\begin{align*}
E_{\text{sq}}\left(  A;B_{1}B_{2}\right)  _{\varphi}  &  \leq E_{\text{sq}%
}\left(  AB_{2}E_{2};B_{1}\right)  _{\varphi}+E_{\text{sq}}\left(  AB_{1}%
E_{1};B_{2}\right)  _{\varphi}\\
&  =E_{\text{sq}}\left(  AA_{2};B_{1}\right)  _{\psi}+E_{\text{sq}}\left(
AA_{1};B_{2}\right)  _{\chi}\\
&  \leq E_{\text{sq}}\left(  \mathcal{N}\right)  +E_{\text{sq}}\left(
\mathcal{M}\right)  .
\end{align*}
The first inequality is an application of Theorem~\ref{thm:subadd}. The
equality follows because squashed entanglement is invariant under local
isometries \cite{CW04}. The final inequality follows because $\left\vert
\phi\right\rangle _{AA_{1}A_{2}}$ is a particular pure-state input to the
channel $\mathcal{N}_{A_{1}\rightarrow B_{1}}$ (with $A_{2}$ being the input
and $AA_{1}$ being the purifying system), so that $E_{\text{sq}}\left(
AA_{2};B_{1}\right)  _{\psi}\leq E_{\text{sq}}\left(  \mathcal{N}\right)  $,
and a similar observation for the inequality $E_{\text{sq}}\left(
AA_{1};B_{2}\right)  _{\chi}\leq E_{\text{sq}}\left(  \mathcal{M}\right)  $.
Since the calculation is independent of which pure state $\left\vert
\phi\right\rangle _{AA_{1}A_{2}}$ we begin with, the inequality in
(\ref{eq:channel-subadd}) follows.
\end{proof}

\section{Upper bound on capacities assisted by unlimited forward and backward
communication}

The squashed entanglement of a quantum channel finds it main application in
the theorems given in this section.

\begin{theorem}
\label{thm:SE-upp-bnd}$E_{\operatorname{sq}}\left(  \mathcal{N}\right)  $ is
an upper bound on $Q_{2}\left(  \mathcal{N}\right)  $, the quantum capacity of
a channel $\mathcal{N}$\ assisted by unlimited forward and backward classical
communication:%
\[
Q_{2}\left(  \mathcal{N}\right)  \leq E_{\operatorname{sq}}\left(
\mathcal{N}\right)  .
\]

\end{theorem}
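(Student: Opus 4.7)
The plan is to follow the amortized-squashed-entanglement strategy of \cite{CW04,CEHHOR07}, adapted to LOCC-assisted channel capacities. The core step is a single-use amortization inequality: for any tripartite state $\rho_{A_1 A_2 B}$ and any channel $\mathcal{N}_{A_2 \to B_2}$,
\[
E_{\operatorname{sq}}(A_1; B B_2)_{\omega} \;\le\; E_{\operatorname{sq}}(A_1 A_2; B)_{\rho} + E_{\operatorname{sq}}(\mathcal{N}),
\]
where $\omega \equiv \mathcal{N}_{A_2 \to B_2}(\rho)$. I would derive this by purifying $\rho$ to $|\psi\rangle_{A_1 A_2 B E_1}$, dilating $\mathcal{N}$ to an isometry $U^{\mathcal{N}}_{A_2 \to B_2 E_2}$, and invoking Theorem~\ref{thm:subadd} on the resulting five-party pure state $|\varphi\rangle_{A_1 B_2 E_2 B E_1}$, with identifications $A \to A_1$, $B_1 \to B$, $B_2 \to B_2$, $E_1 \to E_1$, $E_2 \to E_2$. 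The first right-hand term will collapse to $E_{\operatorname{sq}}(A_1 A_2; B)_{\rho}$ by local-isometric invariance of squashed entanglement, and the second term will be at most $E_{\operatorname{sq}}(\mathcal{N})$ by the definition in~\eqref{eq:squashed-ent-channel} applied to the pure input $|\chi\rangle_{(A_1 B E_1) A_2}$ that purifies $\rho_{A_2}$.

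With the amortization in hand, I would consider an arbitrary $(n, R, \varepsilon)$ LOCC-assisted quantum code: $n$ uses of $\mathcal{N}$ interleaved with arbitrary rounds of local operations and two-way classical communication, producing a final state $\omega_{A_f B_f}$ with $\tfrac{1}{2}\|\omega_{A_f B_f} - \Phi^{2^{nR}}\|_1 \le \varepsilon$. Starting from $E_{\operatorname{sq}}(A; B) = 0$ in the initial product state, each LOCC round can only decrease $E_{\operatorname{sq}}(A; B)$ by the LOCC monotonicity proved in \cite{CW04}, while each channel use increases it by at most $E_{\operatorname{sq}}(\mathcal{N})$ by the amortization step. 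Telescoping over the $n$ channel uses yields $E_{\operatorname{sq}}(A_f; B_f)_{\omega} \le n\, E_{\operatorname{sq}}(\mathcal{N})$.

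To close the argument, I would combine the value $E_{\operatorname{sq}}(\Phi^{2^{nR}}) = nR$ on the target maximally entangled state with the asymptotic continuity bound for squashed entanglement of \cite{AF04} to conclude $E_{\operatorname{sq}}(A_f; B_f)_{\omega} \ge nR - \delta(\varepsilon, nR)$ with $\delta(\varepsilon, nR)/n \to 0$ as $\varepsilon \to 0$ for fixed $R$. Combining with the telescoped upper bound, dividing by $n$, and sending $\varepsilon \to 0$ gives $R \le E_{\operatorname{sq}}(\mathcal{N})$, establishing the theorem.

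The main obstacle is the amortization step. The delicate point is to verify that the purifying system $E_1$ of $\rho_{A_1 A_2 B}$ chosen before the channel use simultaneously serves as the purifying system of the post-channel joint state, so that the hypothesis of Theorem~\ref{thm:subadd} (a genuine five-party pure state) is satisfied. This is automatic because $\mathcal{N}$ acts only on $A_2$ and its isometric dilation introduces the fresh environment $E_2$ while preserving purity on $A_1 B_2 E_2 B E_1$; but the identification has to be set up carefully so that both sides of the subadditivity inequality correspond to the intended squashed entanglements. Once this is done, the reduction of the second subadditivity term to $E_{\operatorname{sq}}(\mathcal{N})$ is immediate from~\eqref{eq:squashed-ent-channel}, and the remaining LOCC-monotonicity and continuity arguments are standard.
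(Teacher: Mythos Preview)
Your proposal is correct and follows essentially the same approach as the paper: both use Theorem~\ref{thm:subadd} to bound the increase in squashed entanglement from a single channel use by $E_{\operatorname{sq}}(\mathcal{N})$, combine this with LOCC monotonicity for the interleaved rounds, and finish with normalization and asymptotic continuity. The only cosmetic difference is that you isolate the single-use step as a standalone amortization lemma and telescope forward from zero, whereas the paper applies subadditivity and monotonicity directly inside a backward induction (``peeling off'' one channel use at a time from the final state); the identifications of systems and the role of the purifying reference are identical in both presentations.
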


\begin{proof}
First recall that the squashed entanglement is monotone under local operations
and classical communication (LOCC), in the sense that $E_{\text{sq}}\left(
A;B\right)  _{\rho}\geq E_{\text{sq}}\left(  A;B\right)  _{\sigma}$ if Alice
and Bob can obtain the state $\sigma_{AB}$ from $\rho_{AB}$ by LOCC
\cite{CW04}. Furthermore, the squashed entanglement is normalized \cite{CW04},
in the sense that $E_{\text{sq}}\left(  A;B\right)  _{\Phi}=\log d$ for a
maximally entangled state defined as%
\[
\left\vert \Phi\right\rangle _{AB}\equiv\frac{1}{\sqrt{d}}\sum_{i}\left\vert
i\right\rangle _{A}\left\vert i\right\rangle _{B},
\]
where $\{|i\rangle_{A}\}$ and $\{|i\rangle_{B}\}$ are complete orthonormal
bases for quantum systems $A$ and $B$, respectively. Finally, the squashed
entanglement satisfies the following continuity inequality \cite{AF04,C06}:%
\[
\text{if \ \ \ }\left\Vert \rho_{AB}-\sigma_{AB}\right\Vert _{1}%
\leq\varepsilon,\text{ \ \ \ then\ \ \ \ }\left\vert E_{\text{sq}}\left(
A;B\right)  _{\rho}-E_{\text{sq}}\left(  A;B\right)  _{\sigma}\right\vert
\leq16\sqrt{\varepsilon}\log d+4h_{2}\left(  2\sqrt{\varepsilon}\right)  ,
\]
where $d=\min\left\{  \left\vert A\right\vert ,\left\vert B\right\vert
\right\}  $ and $h_{2}\left(  x\right)  $ is the binary entropy function with
the property that $\lim_{x\rightarrow0}h_{2}\left(  x\right)  =0$. The most
general $\left(  n,R,\varepsilon\right)  $\ protocol in this setting begins
with Alice preparing a state $\rho_{AA_{1}\cdots A_{n}}^{\left(  1\right)  }$
on $n+1$ systems. She then transmits the system $A_{1}$ through one use of the
channel $\mathcal{N}$, and considering its isometric extension $U_{A_{1}%
\rightarrow B_{1}E_{1}}^{\mathcal{N}}$, we write the output state as
$\sigma_{AB_{1}E_{1}A_{2}\cdots A_{n}}^{\left(  1\right)  }$. Let $R^{(1)}$ be
a system that purifies this state. There is then a round of an arbitrary
amount of LOCC\ between Alice and Bob, resulting in a state $\rho_{AB_{1}%
E_{1}A_{2}\cdots A_{n}}^{\left(  2\right)  }$. This procedure continues, with
Alice transmitting system $A_{2}$ through the channel, leading to a state
$\sigma_{AB_{1}E_{1}B_{2}E_{2}A_{3}\cdots A_{n}}^{\left(  2\right)  }$, etc.
After the $n$th channel use, the state is $\sigma_{AB_{1}E_{1}B_{2}E_{2}\cdots
B_{n}E_{n}}^{\left(  n\right)  }$ (note that the dimension of the system $A$
might change throughout the protocol). Let $R^{(n)}$ be a system that purifies
this state. There is a final round of LOCC, producing a state $\omega
_{ABE_{1}\cdots E_{n}}$, whose reduction $\omega_{AB}$ satisfies%
\[
\left\Vert \omega_{AB}-\left\vert \Phi\right\rangle \left\langle
\Phi\right\vert _{AB}\right\Vert _{1}\leq\varepsilon,
\]
where $\left\vert \Phi\right\rangle _{AB}$ is the maximally entangled state
with Schmidt rank $2^{nR}$.\ We can now proceed by bounding the entanglement
generation rate of any such protocol as follows:%
\begin{align*}
nR  &  =E_{\text{sq}}\left(  A;B\right)  _{\Phi}\\
&  \leq E_{\text{sq}}\left(  A;B\right)  _{\omega}+nf\left(  \varepsilon
\right)  .
\end{align*}
The equality follows from the normalization of the squashed entanglement on
maximally entangled states (as mentioned above). The inequality follows from
continuity of squashed entanglement with an appropriate choice of $f\left(
\varepsilon\right)  $ so that $\lim_{\varepsilon\rightarrow0}f\left(
\varepsilon\right)  =0$. Continuing,
\begin{align*}
E_{\text{sq}}\left(  A;B\right)  _{\omega}  &  \leq E_{\text{sq}}\left(
A;B_{1}\cdots B_{n}\right)  _{\sigma^{\left(  n\right)  }}\\
&  \leq E_{\text{sq}}(AB_{1}E_{1}\cdots B_{n-1}E_{n-1}R^{(n)};B_{n}%
)_{\sigma^{\left(  n\right)  }}\\
&  \ \ \ \ \ +E_{\text{sq}}\left(  AB_{n}E_{n};B_{1}\cdots B_{n-1}\right)
_{\sigma^{\left(  n\right)  }}\\
&  \leq E_{\text{sq}}\left(  \mathcal{N}\right)  +E_{\text{sq}}\left(
AB_{n}E_{n};B_{1}\cdots B_{n-1}\right)  _{\sigma^{\left(  n\right)  }}\\
&  =E_{\text{sq}}\left(  \mathcal{N}\right)  +E_{\text{sq}}\left(
AA_{n};B_{1}\cdots B_{n-1}\right)  _{\rho^{\left(  n\right)  }}\\
&  \leq nE_{\text{sq}}\left(  \mathcal{N}\right)  .
\end{align*}
The first inequality follows from monotonicity of the squashed entanglement
under LOCC. The second inequality is an application of the subadditivity
inequality in Theorem~\ref{thm:subadd}. The third inequality follows because
$E_{\text{sq}}(AB_{1}E_{1}\cdots B_{n-1}E_{n-1}R^{(n)};B_{n})_{\sigma^{\left(
n\right)  }}\leq E_{\text{sq}}\left(  \mathcal{N}\right)  $ (there is a
particular input to the $n$th channel, while the systems $AB_{1}E_{1}\cdots
B_{n-1}E_{n-1}R^{(n)}$ purify the system being input to the channel). The sole
equality follows because the squashed entanglement is invariant under local
isometries (the isometry here being the isometric extension of the channel).
The last inequality follows by induction, i.e., repeating this procedure by
using monotonicity under LOCC and subadditivity, \textquotedblleft peeling
off\textquotedblright\ one term at a time. Putting everything together, we
arrive at
\[
nR\leq nE_{\text{sq}}\left(  \mathcal{N}\right)  +nf\left(  \varepsilon
\right)  ,
\]
which we can divide by $n$ and take the limit as $\varepsilon\rightarrow0$ to
recover the result that $Q_{2}\left(  \mathcal{N}\right)  \leq E_{\text{sq}%
}\left(  \mathcal{N}\right)  $.
\end{proof}

\begin{remark}
Observe that, in spite of the fact that the squashed entanglement of a quantum
channel is difficult to compute exactly, it is useful in obtaining upper
bounds on the assisted capacities $Q_{2}\left(  \mathcal{N}\right)  $ and
$P_{2}\left(  \mathcal{N}\right)  $ (see Theorem~\ref{thm:SE-upp-bnd-priv}%
\ below for $P_{2}\left(  \mathcal{N}\right)  $) because any squashing channel
leads to an upper bound.
\end{remark}

\begin{remark}
Just as the squashed entanglement of a channel $\mathcal{N}$\ serves as an
upper bound on $Q_{2}\left(  \mathcal{N}\right)  $, we can in fact find other
single-letter upper bounds on $Q_{2}\left(  \mathcal{N}\right)  $ from any
function on quantum states that satisfies LOCC\ monotonicity, asymptotic
continuity, normalization (it equals $\log d$ for a maximally entangled state
of dimension $d$), invariance under local unitaries, and the subadditivity
inequality in Theorem~\ref{thm:subadd}. This follows because these were the
only properties of $E_{\operatorname{sq}}\left(  \mathcal{N}\right)  $ that we
used to prove the above theorem.
\end{remark}

A variation of this setting is one in which there is a forward quantum channel
$\mathcal{N}$\ connecting Alice to Bob and a backward quantum channel
$\mathcal{M}$\ connecting Bob to Alice. The most general protocol for
communicating quantum data (or equivalently in this setting, generating
entanglement) has Alice and Bob each prepare a state on $n$ systems, Alice
sends one system through the forward channel, they conduct a round of LOCC,
Bob sends one of his systems through the backward channel, they conduct a
round of LOCC, etc. By essentially the same proof technique as above, it
follows that $E_{\text{sq}}\left(  \mathcal{N}\right)  +E_{\text{sq}}\left(
\mathcal{M}\right)  $ is an upper bound on the total rate of quantum
communication they can generate with these channels.

\begin{theorem}
\label{thm:SE-upp-bnd-priv}The squashed entanglement $E_{\operatorname{sq}%
}\left(  \mathcal{N}\right)  $ serves as an upper bound on the secret-key
agreement capacity $P_{2}\left(  \mathcal{N}\right)  $\ of a quantum channel
$\mathcal{N}$:%
\[
P_{2}\left(  \mathcal{N}\right)  \leq E_{\operatorname{sq}}\left(
\mathcal{N}\right)  .
\]

\end{theorem}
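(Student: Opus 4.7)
The plan is to mirror the proof of Theorem~\ref{thm:SE-upp-bnd} almost verbatim, replacing the role of the maximally entangled state with that of a \emph{private state} (the ``gamma states'' of the Horodecki\,et\,al.\ framework). Concretely, the three properties of squashed entanglement used to prove Theorem~\ref{thm:SE-upp-bnd} were: (i) LOCC monotonicity, (ii) asymptotic continuity, and (iii) normalization, i.e., $E_{\operatorname{sq}}(A;B)_{\Phi}=\log d$ for a maximally entangled state of dimension $d$. Of these, only (iii) must be swapped: by \cite{CEHHOR07}, squashed entanglement is lower bounded by $\log K$ on any private state of key dimension $K$ (with arbitrary shield systems), and this is the only substitution needed to convert a quantum-capacity bound into a private-capacity bound. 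The subadditivity inequality of Theorem~\ref{thm:subadd} again drives the ``peeling-off'' induction.

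First I would recall the structure of an $(n,R,\varepsilon)$ secret-key agreement protocol over $\mathcal{N}$ with unlimited two-way public classical communication. Alice and Bob interleave $n$ uses of $\mathcal{N}$ with rounds of LOCC; using an isometric extension $U^{\mathcal{N}}_{A'\rightarrow BE}$, after the $i$-th channel use the eavesdropper obtains the environment register $E_i$, while all classical messages are also held by her. After the final round of LOCC, the resulting state $\omega_{ABE_{1}\cdots E_{n}}$ satisfies
\[
\left\Vert \omega_{ABE_{1}\cdots E_{n}}-\gamma_{ABE_{1}\cdots E_{n}}\right\Vert _{1}\leq\varepsilon,
\]
where $\gamma$ is an ideal private state with key dimension $K=2^{nR}$ (Alice's and Bob's systems here include both key and shield registers).

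Next, applying the normalization on private states and the continuity inequality for squashed entanglement (both stated in Theorem~\ref{thm:SE-upp-bnd} and available from \cite{CEHHOR07,AF04,C06}), I obtain
\[
nR\leq E_{\operatorname{sq}}(A;B)_{\gamma}\leq E_{\operatorname{sq}}(A;B)_{\omega}+nf(\varepsilon),
\]
with $f(\varepsilon)\rightarrow0$ as $\varepsilon\rightarrow0$. From here the argument is identical to that of Theorem~\ref{thm:SE-upp-bnd}: LOCC monotonicity of $E_{\operatorname{sq}}$ passes from $\omega_{AB}$ to the pre-final-LOCC state, yielding $E_{\operatorname{sq}}(A;B_{1}\cdots B_{n})_{\sigma^{(n)}}$; Theorem~\ref{thm:subadd} splits off one term bounded by $E_{\operatorname{sq}}(\mathcal{N})$ (because $B_{n}$ arises from a single channel use whose input is purified by all remaining systems, together with the purifying system of the state just before the $n$-th use) and leaves a residual squashed entanglement on the state before the $n$-th channel use, up to a local isometry; iterating $n$ times produces $n\,E_{\operatorname{sq}}(\mathcal{N})$. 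Dividing by $n$ and letting $\varepsilon\rightarrow 0$ yields $P_{2}(\mathcal{N})\leq E_{\operatorname{sq}}(\mathcal{N})$.

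The main (and really only) substantive obstacle is (iii), the normalization step on private states: whereas for $Q_{2}$ one needs merely that $E_{\operatorname{sq}}(\Phi)=\log d$, here one needs that $E_{\operatorname{sq}}(\gamma)\geq\log K$ for \emph{any} private state of key dimension $K$, with shield systems of arbitrary dimension. This is precisely the content of the squashed-entanglement bound on distillable key in \cite{CEHHOR07}, so I would invoke that result rather than reprove it. Once normalization is in hand, every remaining step is a line-for-line translation of the proof of Theorem~\ref{thm:SE-upp-bnd}.
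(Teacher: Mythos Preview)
Your proposal is correct and matches the paper's own proof essentially line for line: the paper likewise reduces to the argument of Theorem~\ref{thm:SE-upp-bnd}, invoking (a) the equivalence between LOPC secret-key distillation and LOCC distillation of private states \cite{HHHO05,HHHO09}, (b) the normalization $E_{\operatorname{sq}}(\gamma)\geq k$ for any $k$-bit private state (Proposition~4.19 of \cite{C06}), and (c) continuity and the subadditivity inequality of Theorem~\ref{thm:subadd}. The only point you leave slightly implicit is the LOPC-to-LOCC translation that justifies treating the public classical communication rounds as LOCC and the target as a private state; since you already place the classical messages with Eve and describe the output as $\varepsilon$-close to a private state, this is handled in spirit, but it is worth citing \cite{HHHO05,HHHO09} explicitly at that step.
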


\begin{proof}
Christandl \textit{et al}.~showed in prior work that the squashed entanglement
is a \textit{secrecy monotone}, in the sense that it does not increase under
local operations and public classical (LOPC) communication \cite{C06,CEHHOR07}%
. The method for doing so was to exploit the fact that LOPC distillation of
secret key is equivalent to LOCC distillation of private states
\cite{HHHO05,HHHO09}. Combining this with the fact that squashed entanglement
is normalized, in the sense that it is never smaller than $k$ for a $k$-bit
private state (see Proposition~4.19 of \cite{C06}), and a proof essentially
identical to the proof of Theorem~\ref{thm:SE-upp-bnd}, we recover that
$P_{2}\left(  \mathcal{N}\right)  \leq E_{\text{sq}}\left(  \mathcal{N}%
\right)  $.
\end{proof}

By similar arguments as above, the secret-key agreement capacity is upper
bounded by $E_{\text{sq}}\left(  \mathcal{N}\right)  +E_{\text{sq}}\left(
\mathcal{M}\right)  $ in the setting where there is a forward quantum channel
$\mathcal{N}$ and a backward quantum channel$~\mathcal{M}$.

\begin{remark}
Just as the squashed entanglement of a channel $\mathcal{N}$\ serves as an
upper bound on $P_{2}\left(  \mathcal{N}\right)  $, we can find other
single-letter upper bounds on it from any function on quantum states that
satisfies LOCC\ monotonicity, asymptotic continuity, normalization (it is
never smaller than $k$ for a $k$-bit private state), invariance under local
unitaries, and the subadditivity inequality in Theorem~\ref{thm:subadd}.
Again, this follows because these were the only properties that we used to
prove the above theorem.
\end{remark}

\begin{remark}
The squashed entanglement of a quantum channel is never larger than its
entanglement cost \cite{BBCW13}. This is an immediate consequence of
Corollary~6 of \cite{CW04}.
\end{remark}

\section{Application to Pauli channels}

In this section, we apply Theorems~\ref{thm:SE-upp-bnd} and
\ref{thm:SE-upp-bnd-priv} to the case of a Pauli channel. That is, we
establish an upper bound on $Q_{2}\left(  \mathcal{P}\right)  $ and
$P_{2}\left(  \mathcal{P}\right)  $ where $\mathcal{P}$ is a Pauli channel,
defined as%
\begin{equation}
\mathcal{P}(\rho)=p_{0}\rho+p_{1}X\rho X+p_{2}Y\rho Y+p_{3}Z\rho Z.
\label{eq:Pauli_channel}%
\end{equation}
In the above, the probabilities $p_{i}$ are non-negative, $\sum_{i}p_{i}=1$,
and%
\begin{equation}
I=\left[
\begin{array}
[c]{cc}%
1 & 0\\
0 & 1
\end{array}
\right]  ,\quad X=\left[
\begin{array}
[c]{cc}%
0 & 1\\
1 & 0
\end{array}
\right]  ,\quad Y=\left[
\begin{array}
[c]{cc}%
0 & -i\\
i & 0
\end{array}
\right]  ,\quad Z=\left[
\begin{array}
[c]{cc}%
1 & 0\\
0 & -1
\end{array}
\right]  , \label{eq:Pauli_operators}%
\end{equation}
are the Pauli operators. We also denote the Pauli operators by $\sigma^{0}$,
\ldots, $\sigma^{3}$, respectively. Note that the Pauli channel is equivalent
to%
\[
\mathcal{P}(\rho)=p_{0}\rho+p_{1}X\rho X+p_{2}XZ\rho ZX+p_{3}Z\rho Z,
\]
due to the fact that $Y=iXZ$ and $XZ=-ZX$.

\begin{theorem}
\label{thm:Pauli-bound}The squashed entanglement leads to the following upper
bound on $Q_{2}\left(  \mathcal{P}\right)  $ and $P_{2}\left(  \mathcal{P}%
\right)  $:%
\[
Q_{2}\left(  \mathcal{P}\right)  ,\ P_{2}\left(  \mathcal{P}\right)  \leq
\min_{\varphi_{1},\varphi_{2},\varphi_{3}}\frac{1}{2}\left[  H(\lambda
)+H(\lambda^{\prime})\right]  -1,
\]
where $\mathcal{P}$ is a Pauli channel and $H(\lambda)$ is the Shannon entropy
of the distribution $\lambda=\{\lambda_{0},\lambda_{1},\lambda_{2},\lambda
_{3}\}$, with%
\begin{align}
\lambda_{0} &  =\frac{1}{4}\left\vert \sqrt{p_{0}}+e^{i\varphi_{3}}\sqrt
{p_{3}}+e^{i\varphi_{1}}\sqrt{p_{1}}-e^{i\varphi_{2}}\sqrt{p_{2}}\right\vert
^{2},\\
\lambda_{1} &  =\frac{1}{4}\left\vert \sqrt{p_{0}}+e^{i\varphi_{3}}\sqrt
{p_{3}}-e^{i\varphi_{1}}\sqrt{p_{1}}+e^{i\varphi_{2}}\sqrt{p_{2}}\right\vert
^{2},\\
\lambda_{2} &  =\frac{1}{4}\left\vert \sqrt{p_{0}}-e^{i\varphi_{3}}\sqrt
{p_{3}}+e^{i\varphi_{1}}\sqrt{p_{1}}+e^{i\varphi_{2}}\sqrt{p_{2}}\right\vert
^{2},\\
\lambda_{3} &  =\frac{1}{4}\left\vert -\sqrt{p_{0}}+e^{i\varphi_{3}}%
\sqrt{p_{3}}+e^{i\varphi_{1}}\sqrt{p_{1}}+e^{i\varphi_{2}}\sqrt{p_{2}%
}\right\vert ^{2},
\end{align}
and $\lambda^{\prime}$ is the same as $\lambda$ except with the substitution
$\varphi_{2}\rightarrow\varphi_{2}+\pi$.
\end{theorem}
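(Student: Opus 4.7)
The plan is to combine Theorems~\ref{thm:SE-upp-bnd} and~\ref{thm:SE-upp-bnd-priv} (which reduce both capacities to upper bounding $E_{\operatorname{sq}}(\mathcal{P})$) with the reformulation from Lemma~\ref{lem:alt-char-SE},
\[
E_{\operatorname{sq}}(\mathcal{P}) \;=\; \tfrac{1}{2}\max_{\rho_{A'}}\inf_{V_{E\to E'F}}\bigl[H(B|E')_\omega + H(B|F)_\omega\bigr],
\]
by evaluating the right-hand side on a single well-chosen input and a three-parameter family of squashing isometries. First I would argue that the outer maximum is attained at $\rho_{A'}=I/2$: Pauli covariance of $\mathcal{P}$ together with local-unitary invariance of squashed entanglement imply $E_{\operatorname{sq}}(\mathcal{P})$ is invariant under $\rho_{A'}\mapsto \sigma^i\rho_{A'}\sigma^i$, and concavity in the input (Lemma~\ref{lem:SE-concave}) then shows the Pauli-twirled state, which equals $I/2$, is optimal. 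Thus the purification can be taken as $|\Phi^+\rangle_{AA'}$, and with the isometric extension $U^{\mathcal{P}}_{A' \to BE}=\sum_i\sqrt{p_i}\,\sigma^i_{A'\to B}\otimes|i\rangle_E$ I would propose the ansatz
\[
V_{E\to E'F}\,|i\rangle_E \;=\; e^{i\varphi_i}\,(\sigma^i_{E'}\otimes I_F)\,|\Phi^+\rangle_{E'F}, \qquad \varphi_0:=0,
\]
which is an isometry because $\{(\sigma^i\otimes I)|\Phi^+\rangle\}_i$ is the Bell basis of $E'F$.

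With these choices, $|\omega\rangle_{ABE'F}=\tfrac{1}{2}\sum_{a,b,e,f}T(b,a,e,f)|abef\rangle$ with $T(b,a,e,f)=\sum_i\sqrt{p_i}\,e^{i\varphi_i}(\sigma^i)_{ba}(\sigma^i)_{ef}$. Organizing $T$ as a $4\times 4$ matrix with rows $(b,e)$ and columns $(a,f)$, one has $\omega_{BE'}=\tfrac{1}{4}TT^\dagger$, and reordering the basis as $\{(00),(11),(01),(10)\}$ renders $T$ block-diagonal with two $2\times 2$ blocks of the form $\bigl(\begin{smallmatrix}x & y \\ y & x\end{smallmatrix}\bigr)$ whose squared singular values are $|x\pm y|^2$. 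Careful bookkeeping of the Pauli matrix elements (in particular $Y_{01}Y_{01}=-1$, $Y_{01}Y_{10}=+1$, and $Z_{bb}Z_{ee}=(-1)^{b+e}$) then yields the four eigenvalues of $\omega_{BE'}$ as $\tfrac14|A_1\pm A_2|^2$ and $\tfrac14|B_1\pm B_2|^2$, where $A_1=\sqrt{p_0}+e^{i\varphi_3}\sqrt{p_3}$, $A_2=e^{i\varphi_1}\sqrt{p_1}-e^{i\varphi_2}\sqrt{p_2}$, $B_1=\sqrt{p_0}-e^{i\varphi_3}\sqrt{p_3}$, $B_2=e^{i\varphi_1}\sqrt{p_1}+e^{i\varphi_2}\sqrt{p_2}$; these are precisely $\lambda_0,\lambda_1,\lambda_2,\lambda_3$. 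The same calculation for $\omega_{BF}$ (now viewing $T$ with rows $(b,f)$ and columns $(a,e)$) effectively flips the sign of $e^{i\varphi_2}\sqrt{p_2}$ throughout, i.e.\ it is the substitution $\varphi_2\mapsto\varphi_2+\pi$ from the theorem, yielding eigenvalues $\lambda'_0,\ldots,\lambda'_3$.

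Finally, because $\omega_{E'F}=\sum_i p_i\,(\sigma^i\otimes I)|\Phi^+\rangle\langle\Phi^+|(\sigma^i\otimes I)$ and each Bell state has maximally mixed marginals on $E'$ and on $F$, the reduced states satisfy $\omega_{E'}=\omega_F=I/2$ independently of the phases. Therefore $H(B|E')_\omega=H(\lambda)-1$ and $H(B|F)_\omega=H(\lambda')-1$, so the bound $\tfrac12[H(\lambda)+H(\lambda')]-1$ holds for every $(\varphi_1,\varphi_2,\varphi_3)$, and the infimum is a minimum by compactness and continuity. I expect the main obstacle to be the combinatorial step of guessing the right Bell-basis parameterization of $V$: a naive product-basis ansatz of the form $V|i\rangle=e^{i\varphi_i}|f(i)\rangle|g(i)\rangle$ would produce eigenvalues that are bilinear in $\sqrt{p_ip_j}$ rather than the clean $\sqrt{p_i}$-linear combinations of the theorem, and generally yields a strictly looser bound.
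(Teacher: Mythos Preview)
Your proposal is correct and essentially the same as the paper's proof: your squashing isometry $V|i\rangle=e^{i\varphi_i}(\sigma^i\otimes I)|\Phi^+\rangle_{E'F}$ composed with the canonical Stinespring dilation of $\mathcal{P}$ is exactly the paper's isometry $W_{A'\to BEF}$ (since $(\sigma^i\otimes I)|\Phi^+\rangle$ runs over the Bell basis), and your $T$-matrix eigenvalue computation reproduces the paper's block-diagonalization of $\rho_{BE}$ and the swap/$\varphi_2\to\varphi_2+\pi$ symmetry for $\rho_{BF}$. The only cosmetic difference is that you justify the optimality of $\rho_{A'}=I/2$ via Lemma~\ref{lem:SE-concave} plus local-unitary invariance of $E_{\operatorname{sq}}$, whereas the paper argues covariance of the specific maps $\mathrm{Tr}_F\circ W$ and $\mathrm{Tr}_E\circ W$ and then invokes Corollary~\ref{cor:SE-concave-isometry-fixed}.
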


\begin{proof}
Let%
\begin{equation}
|\Phi^{\pm}\rangle=\frac{1}{\sqrt{2}}(|00\rangle\pm|11\rangle),\quad|\Psi
^{\pm}\rangle=\frac{1}{\sqrt{2}}(|01\rangle\pm|10\rangle),
\label{eq:Bell_states}%
\end{equation}
denote the four Bell states.

An isometric extension of the Pauli channel is as follows:\ the environment
prepares a Bell state $|\Phi^{+}\rangle_{EF}$ and interacts qubit $E$ with the
input state (in system $A^{\prime}$) according to a controlled Pauli gate.
That is, this isometric extension $W_{A^{\prime}\rightarrow BEF}$\ of
$\mathcal{P}_{A^{\prime}\rightarrow B}$ acting on an input $\left\vert
\psi\right\rangle _{A^{\prime}}$ is as follows:%
\begin{multline}
W_{A^{\prime}\rightarrow BEF}\left\vert \psi\right\rangle _{A^{\prime}}%
=\sqrt{p_{0}}\left\vert \psi\right\rangle _{B}\left\vert \Phi^{+}\right\rangle
_{EF}+\sqrt{p_{1}}e^{i\varphi_{1}}X\left\vert \psi\right\rangle _{B}\left\vert
\Psi^{+}\right\rangle _{EF}\label{eq:after_Pauli_gate}\\
+\sqrt{p_{2}}e^{i\varphi_{2}}XZ\left\vert \psi\right\rangle _{B}\left\vert
\Psi^{-}\right\rangle _{EF}+e^{i\varphi_{3}}Z\left\vert \psi\right\rangle
_{B}\left\vert \Phi^{-}\right\rangle _{EF},
\end{multline}
where $\varphi_{1,2,3}$ are free parameters that we can choose later (we
already take $\varphi_{0}=0$ because invariance of quantum states under a
global phase eliminates one of these degrees of freedom). Note that tracing
out systems $E$ and $F$ from (\ref{eq:after_Pauli_gate}) gives
(\ref{eq:Pauli_channel}) with $\left\vert \psi\right\rangle $\ as input
(meaning that (\ref{eq:after_Pauli_gate}) is a legitimate isometric
extension). Let the squashing channel consist of tracing over system$~F$.

We now argue that if the channel's isometry and the corresponding squashing
channel are fixed to be as above, then the optimal input state on systems
$AA^{\prime}$ to maximize the conditional mutual information $I(A;B|E)$ is the
maximally entangled state $\left\vert \Phi^{+}\right\rangle _{AA^{\prime}}$.
Indeed, consider that the Pauli channel is covariant, so that for all Pauli
operators $U$ there exists a unitary $V$ such that%
\[
\mathcal{P}\left(  U\left(  \psi\right)  \right)  =V\left(  \mathcal{P}\left(
\psi\right)  \right)  .
\]
Let $\mathcal{P}^{1}\left(  \cdot\right)  =$Tr$_{F}\left\{  W\left(
\cdot\right)  W^{\dag}\right\}  $ and $\mathcal{P}^{2}\left(  \cdot\right)
=$Tr$_{E}\left\{  W\left(  \cdot\right)  W^{\dag}\right\}  $. Both of these
channels are covariant, in the sense that for all Pauli operators $U$, there
exist unitaries $V_{B}^{1}$ and $V_{E}^{1}$ such that%
\[
\mathcal{P}^{1}\left(  U\left(  \psi\right)  \right)  =\left(  V_{B}%
^{1}\otimes V_{E}^{1}\right)  \left(  \mathcal{P}\left(  \psi\right)  \right)
\left(  V_{B}^{1}\otimes V_{E}^{1}\right)  ^{\dag}.
\]
Similarly, for all Pauli operators $U$, there exist unitaries $V_{B}^{2}$ and
$V_{F}^{2}$ such that%
\[
\mathcal{P}^{2}\left(  U\left(  \psi\right)  \right)  =\left(  V_{B}%
^{2}\otimes V_{F}^{2}\right)  \left(  \mathcal{P}\left(  \psi\right)  \right)
\left(  V_{B}^{2}\otimes V_{F}^{2}\right)  ^{\dag}.
\]
Equivalently, by inspecting (\ref{eq:after_Pauli_gate}), we see that%
\begin{align*}
W_{A^{\prime}\rightarrow BEF}X_{A^{\prime}}\left\vert \psi\right\rangle
_{A^{\prime}}  & =\left(  X_{B}\otimes X_{E}\otimes X_{F}\right)
W_{A^{\prime}\rightarrow BEF}\left\vert \psi\right\rangle _{A^{\prime}},\\
W_{A^{\prime}\rightarrow BEF}Z_{A^{\prime}}\left\vert \psi\right\rangle
_{A^{\prime}}  & =\left(  Z_{B}\otimes Z_{E}\otimes Z_{F}\right)
W_{A^{\prime}\rightarrow BEF}\left\vert \psi\right\rangle _{A^{\prime}},
\end{align*}
and from this observation (extended by linearity), the covariance stated above
follows. This means that%
\[
H\left(  B|E\right)  _{\omega}+H\left(  B|F\right)  _{\omega}=H\left(
B|E\right)  _{\tau}+H\left(  B|F\right)  _{\tau},
\]
where $\omega$ is the state resulting from preparing a state $\rho_{A^{\prime
}}$ at the input $A^{\prime}$ and $\tau$ is the state resulting from preparing
$U\rho_{A^{\prime}}U^{\dag}$, with $U$ any Pauli operator. We can then apply
Corollary~\ref{cor:SE-concave-isometry-fixed}\ (concavity of $H\left(
B|E\right)  +H\left(  B|F\right)  $ in the input density operator) to conclude
that the maximizing input density operator is the maximally mixed state, since
$I/2=\frac{1}{4}\sum_{i=0}^{3}\sigma^{i}\rho\sigma^{i}$ for any input
state$~\rho$. Since the maximally mixed state on $A^{\prime}$ is purified by
the maximally entangled state $\left\vert \Phi^{+}\right\rangle _{AA^{\prime}%
}$, we conclude that the maximally entangled state maximizes $I(A;B|E)$
whenever the channel isometry and squashing channel are fixed to be of the
form in (\ref{eq:after_Pauli_gate}) (if the channel isometry is not of this
form, then we can take the squashing channel to consist of a preliminary
isometric rotation to make it have the above form, followed by a tracing out
of system$~F$).

We now evaluate the squashed entanglement upper bound using the above
squashing channel. For this purpose, we recall Lemma~\ref{lem:alt-char-SE},
which expresses the squashed entanglement as%
\[
\tfrac{1}{2} \left[H\left(  B|E\right)  +H\left(  B|F\right)  \right].
\]
So we need to compute the eigenvalues of various reduced density matrices in
order to evaluate the above entropies. To derive the reduced density operator
$\rho_{BE}$, we factorize the systems $A$ and $F$ of the state $W_{A^{\prime
}\rightarrow BEF}\left\vert \Phi^{+}\right\rangle _{AA^{\prime}}$ as follows:%
\begin{align}
&  \sqrt{p_{0}}|\Phi_{+}\rangle_{AB}|\Phi_{+}\rangle_{EF}+e^{i\varphi_{3}%
}\sqrt{p_{3}}|\Phi_{-}\rangle_{AB}|\Phi_{-}\rangle_{EF}+e^{i\varphi_{1}}%
\sqrt{p_{1}}|\Psi_{+}\rangle_{AB}|\Psi_{+}\rangle_{EF}+e^{i\varphi_{2}}%
\sqrt{p_{2}}|\Psi_{-}\rangle_{AB}|\Psi_{-}\rangle_{EF},\label{eq:factorize_AF}%
\\
&  =\frac{1}{\sqrt{2}}|0\rangle_{A}\left(  \sqrt{p_{0}}|0\rangle_{B}|\Phi
_{+}\rangle_{EF}+e^{i\varphi_{3}}\sqrt{p_{3}}|0\rangle_{B}|\Phi_{-}%
\rangle_{EF}+e^{i\varphi_{1}}\sqrt{p_{1}}|1\rangle_{B}|\Psi_{+}\rangle
_{EF}+e^{i\varphi_{2}}\sqrt{p_{2}}|1\rangle_{B}|\Psi_{-}\rangle_{EF}\right)
\nonumber\\
&  \quad+\frac{1}{\sqrt{2}}|1\rangle_{A}\left(  \sqrt{p_{0}}|1\rangle_{B}%
|\Phi_{+}\rangle_{EF}-e^{i\varphi_{3}}\sqrt{p_{3}}|1\rangle_{B}|\Phi
_{-}\rangle_{EF}+e^{i\varphi_{1}}\sqrt{p_{1}}|0\rangle_{B}|\Psi_{+}%
\rangle_{EF}-e^{i\varphi_{2}}\sqrt{p_{2}}|0\rangle_{B}|\Psi_{-}\rangle
_{EF}\right) \nonumber\\
&  =\frac{1}{2}|0\rangle_{A}|0\rangle_{F}\left(  \sqrt{p_{0}}|00\rangle
+e^{i\varphi_{3}}\sqrt{p_{3}}|00\rangle+e^{i\varphi_{1}}\sqrt{p_{1}}%
|11\rangle-e^{i\varphi_{2}}\sqrt{p_{2}}|11\rangle\right)  _{BE}\nonumber\\
&  \quad+\frac{1}{2}|0\rangle_{A}|1\rangle_{F}\left(  \sqrt{p_{0}}%
|01\rangle-e^{i\varphi_{3}}\sqrt{p_{3}}|01\rangle+e^{i\varphi_{1}}\sqrt{p_{1}%
}|10\rangle+e^{i\varphi_{2}}\sqrt{p_{2}}|10\rangle\right)  _{BE}\nonumber\\
&  \quad+\frac{1}{2}|1\rangle_{A}|0\rangle_{F}\left(  \sqrt{p_{0}}%
|10\rangle-e^{i\varphi_{3}}\sqrt{p_{3}}|10\rangle+e^{i\varphi_{1}}\sqrt{p_{1}%
}|01\rangle+e^{i\varphi_{2}}\sqrt{p_{2}}|01\rangle\right)  _{BE}\nonumber\\
&  \quad+\frac{1}{2}|1\rangle_{A}|1\rangle_{F}\left(  \sqrt{p_{0}}%
|11\rangle+e^{i\varphi_{3}}\sqrt{p_{3}}|11\rangle+e^{i\varphi_{1}}\sqrt{p_{1}%
}|00\rangle-e^{i\varphi_{2}}\sqrt{p_{2}}|00\rangle\right)  _{BE}.\nonumber
\end{align}
Tracing out $A$ and $F$, we have%
\begin{multline}
\rho_{BE}=\frac{1}{4}\left[  \left\{  (\sqrt{p_{0}}+e^{i\varphi_{3}}%
\sqrt{p_{3}})|00\rangle+(e^{i\varphi_{1}}\sqrt{p_{1}}-e^{i\varphi_{2}}%
\sqrt{p_{2}})|11\rangle\right\}  \left\{  h.c.\right\}  \right. \nonumber\\
+\left\{  (\sqrt{p_{0}}+e^{i\varphi_{3}}\sqrt{p_{3}})|11\rangle+(e^{i\varphi
_{1}}\sqrt{p_{1}}-e^{i\varphi_{2}}\sqrt{p_{2}})|00\rangle\right\}  \left\{
h.c.\right\} \nonumber\\
+\left\{  (\sqrt{p_{0}}-e^{i\varphi_{3}}\sqrt{p_{3}})|01\rangle+(e^{i\varphi
_{1}}\sqrt{p_{1}}+e^{i\varphi_{2}}\sqrt{p_{2}})|10\rangle\right\}  \left\{
h.c.\right\} \nonumber\\
\left.  +\left\{  (\sqrt{p_{0}}-e^{i\varphi_{3}}\sqrt{p_{3}})|10\rangle
+(e^{i\varphi_{1}}\sqrt{p_{1}}+e^{i\varphi_{2}}\sqrt{p_{2}})|01\rangle
\right\}  \left\{  h.c.\right\}  \right]  ,
\end{multline}
which consists of two block diagonalized submatrices in the subspaces spanned
by $\{|00\rangle,\,|11\rangle\}$ and $\{|01\rangle,\,|10\rangle\}$,
respectively:%
\[
=\frac{1}{4}\left[
\begin{array}
[c]{cc}%
|a|^{2}+|b|^{2} & ab^{\ast}+a^{\ast}b\\
ab^{\ast}+a^{\ast}b & |a|^{2}+|b|^{2}%
\end{array}
\right]  \oplus\frac{1}{4}\left[
\begin{array}
[c]{cc}%
|c|^{2}+|d|^{2} & cd^{\ast}+c^{\ast}d\\
cd^{\ast}+c^{\ast}d & |c|^{2}+|d|^{2}%
\end{array}
\right]  ,
\]
where%
\begin{align}
a  &  =\sqrt{p_{0}}+e^{i\varphi_{3}}\sqrt{p_{3}},\nonumber\\
b  &  =e^{i\varphi_{1}}\sqrt{p_{1}}-e^{i\varphi_{2}}\sqrt{p_{2}},\nonumber\\
c  &  =\sqrt{p_{0}}-e^{i\varphi_{3}}\sqrt{p_{3}},\nonumber\\
d  &  =e^{i\varphi_{1}}\sqrt{p_{1}}+e^{i\varphi_{2}}\sqrt{p_{2}}.\nonumber
\end{align}
This is diagonalized by the following unitary transformation:%
\[
U=\frac{1}{\sqrt{2}}\left[
\begin{array}
[c]{cc}%
1 & 1\\
1 & -1
\end{array}
\right]  \oplus\frac{1}{\sqrt{2}}\left[
\begin{array}
[c]{cc}%
1 & 1\\
1 & -1
\end{array}
\right]
\]
so that%
\[
U\rho_{BE}U^{\dagger}=\frac{1}{4}\left[
\begin{array}
[c]{cc}%
|a+b|^{2} & 0\\
0 & |a-b|^{2}%
\end{array}
\right]  \oplus\frac{1}{4}\left[
\begin{array}
[c]{cc}%
|c+d|^{2} & 0\\
0 & |c-d|^{2}%
\end{array}
\right]  .
\]
Thus we find the following four eigenvalues for $\rho_{BE}$:%
\begin{align}
\lambda_{0}  &  =\frac{1}{4}\left\vert \sqrt{p_{0}}+e^{i\varphi_{3}}%
\sqrt{p_{3}}+e^{i\varphi_{1}}\sqrt{p_{1}}-e^{i\varphi_{2}}\sqrt{p_{2}%
}\right\vert ^{2},\label{eq:rho_BE_eigenvalues}\\
\lambda_{1}  &  =\frac{1}{4}\left\vert \sqrt{p_{0}}+e^{i\varphi_{3}}%
\sqrt{p_{3}}-e^{i\varphi_{1}}\sqrt{p_{1}}+e^{i\varphi_{2}}\sqrt{p_{2}%
}\right\vert ^{2},\\
\lambda_{2}  &  =\frac{1}{4}\left\vert \sqrt{p_{0}}-e^{i\varphi_{3}}%
\sqrt{p_{3}}+e^{i\varphi_{1}}\sqrt{p_{1}}+e^{i\varphi_{2}}\sqrt{p_{2}%
}\right\vert ^{2},\\
\lambda_{3}  &  =\frac{1}{4}\left\vert -\sqrt{p_{0}}+e^{i\varphi_{3}}%
\sqrt{p_{3}}+e^{i\varphi_{1}}\sqrt{p_{1}}+e^{i\varphi_{2}}\sqrt{p_{2}%
}\right\vert ^{2},
\end{align}
from which we can calculate the von Neumann entropy as%
\begin{equation}
H(BE)_{\rho}=H(\lambda), \label{eq:rho_BE_entropy}%
\end{equation}
where $H(\lambda)$ is the Shannon entropy of the distribution $\lambda
=\{\lambda_{0},\lambda_{1},\lambda_{2},\lambda_{3}\}$.

Tracing over the $B$ system results in the maximally mixed state on the $E$
system, so that%
\[
H\left(  E\right)  =1.
\]
Similarly, we find that the reduced state on system $F$ is maximally mixed, so
that $H\left(  F\right)  =1$. Now, if we instead trace over systems $A$ and
$E$, the calculation of the eigenvalues of the reduced density matrix on
systems $B$ and $F$ is similar to that detailed above. However, observe that
all of the Bell states are invariant under a swap, with the exception of
$\left\vert \Psi^{-}\right\rangle $. So, starting from (\ref{eq:factorize_AF}%
), we realize that the eigenvalues are the same, except we have the
substitution $\varphi_{2}\rightarrow\varphi_{2}+\pi$ due to the previous
observation. We then recover the statement of the theorem.
\end{proof}

As a lower bound on both $Q_{2}\left(  \mathcal{P}\right)  $ and$\ P_{2}%
\left(  \mathcal{P}\right)  $, both the direct and the reverse coherent
information \cite{GPLS09} for a Pauli channel with the Bell state input are
given by%
\begin{equation}
I_{R}(\mathcal{N}_{P},\Phi^{+})=H(A)_{\rho}-H(AB)_{\rho}=\max\left\{
0,1-H(\mathbf{p})\right\}  , \label{eq:reverse_coh_info_Pauli}%
\end{equation}
where $\mathbf{p}=\{p_{0},p_{1},p_{2},p_{3}\}$.

\subsection{Dephasing channel}

The dephasing channel $\mathcal{D}$\ is obtained from a Pauli channel by
setting $p_{1}=p_{2}=0$. The eigenvalues of $\rho_{BE}$ in this case reduce to%
\begin{align}
\lambda_{0,1} &  =\frac{1}{4}\left(  1+2\sqrt{p(1-p)}\cos\varphi_{3}\right)
,\label{eq:eigenvalues_bit_flip}\\
\lambda_{2,3} &  =\frac{1}{4}\left(  1-2\sqrt{p(1-p)}\cos\varphi_{3}\right)  ,
\end{align}
where we set $p=p_{0}$. Clearly, $H(\lambda)=H\left(  \lambda^{\prime}\right)
$ and $H(\lambda)$ is minimized by setting $\varphi_{3}=0$. So our upper bound
becomes%
\[
Q_{2}\left(  \mathcal{D}\right)  ,\ P_{2}\left(  \mathcal{D}\right)  \leq
h_{2}\left(  \frac{1+2\sqrt{p\left(  1-p\right)  }}{2}\right)  ,
\]
where $h_{2}\left(  \cdot\right)  $ is the binary entropy function. Note that
the above bound is equal to the entanglement cost of the dephasing channel
(compare with (83) of \cite{BBCW13}). Both the direct and the reverse coherent
information for a dephasing channel are given by $I_{R}=1-h_{2}(p)$.%
\begin{figure}
[ptb]
\begin{center}
\includegraphics[
natheight=4.292100in,
natwidth=6.444600in,
height=2.3333in,
width=3.4904in
]%
{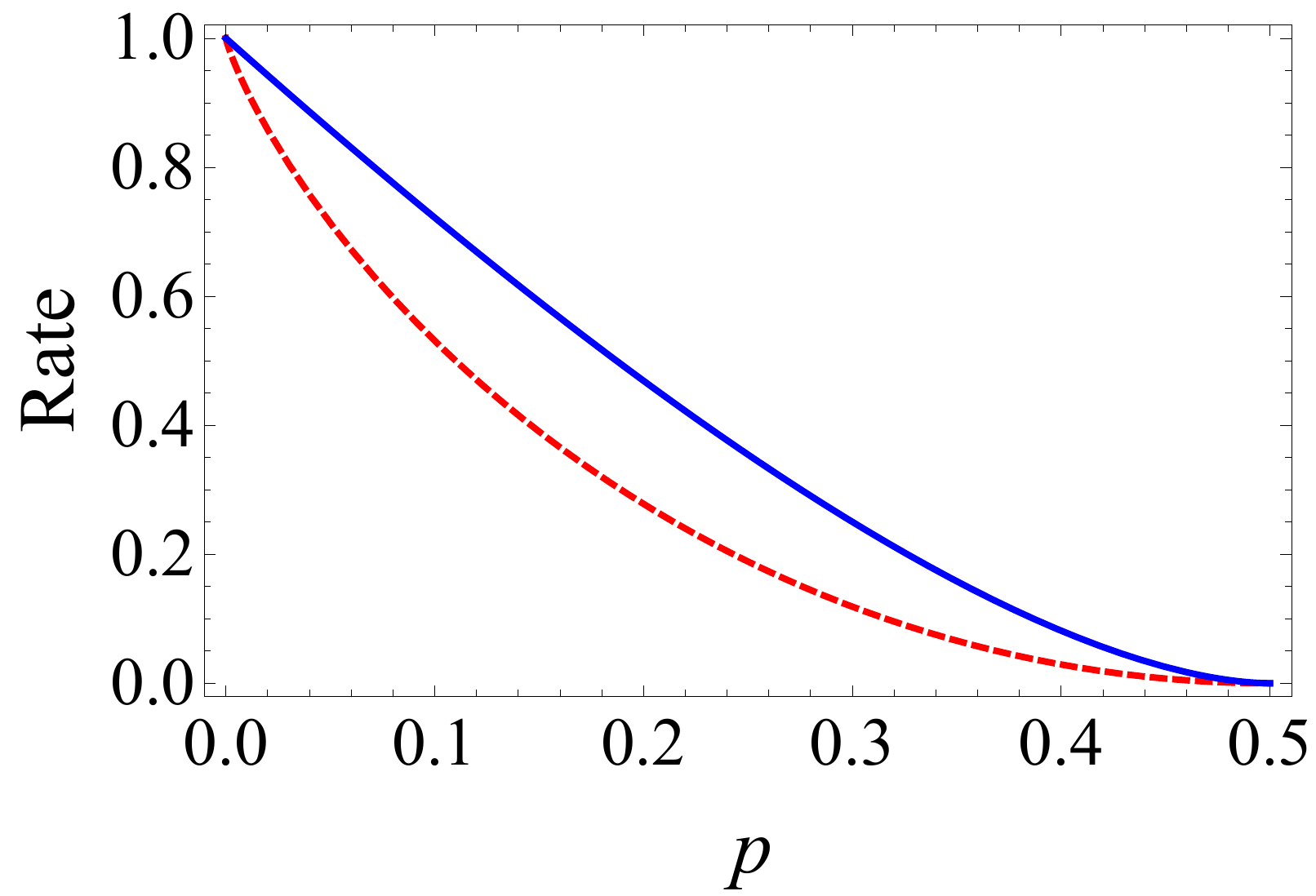}%
\caption{Upper and lower bounds on $Q_{2}$ and $P_{2}$ for a qubit dephasing
channel. The dashed curve is both the direct and reverse coherent information,
while the solid curve is the squashed entanglement upper bound. Note that this
latter curve is equal to the entanglement cost from \cite{BBCW13} for this
particular channel.}%
\label{fig:dephasing}%
\end{center}
\end{figure}

\subsection{Depolarizing channel}

The depolarizing channel is also a special case of a Pauli channel:%
\begin{align}
\mathcal{N}_{\text{dep}}\left(  \rho\right)   &  =(1-p)\rho+p\frac{I}%
{2}\nonumber\label{eq:depolarizing_channel}\\
&  =\left(  1-\frac{3p}{4}\right)  \rho+\frac{p}{4}\left(  X\rho X+Y\rho
Y+Z\rho Z\right)  .
\end{align}
Numerical work indicates that the minimizing choice for the phases
$\varphi_{1}$, $\varphi_{2}$, and $\varphi_{3}$ from
Theorem~\ref{thm:Pauli-bound}\ is simply $\varphi_{1}=\varphi_{2}=\varphi
_{3}=0$. Figure~\ref{fig:depolarizing}\ plots the squashed entanglement upper
bound and both the direct and the reverse coherent information for this
channel. This figure makes it clear that our squashed entanglement upper bound
is not particularly tight in this case because the qubit depolarizing channel
becomes entanglement-breaking whenever $p\geq2/3$ and thus $Q_{2}%
(\mathcal{N}_{\text{dep}})=0$ whenever $p\geq2/3$ \cite{R01}.%
\begin{figure}
[ptb]
\begin{center}
\includegraphics[
natheight=4.375100in,
natwidth=6.458400in,
height=2.3869in,
width=3.5111in
]%
{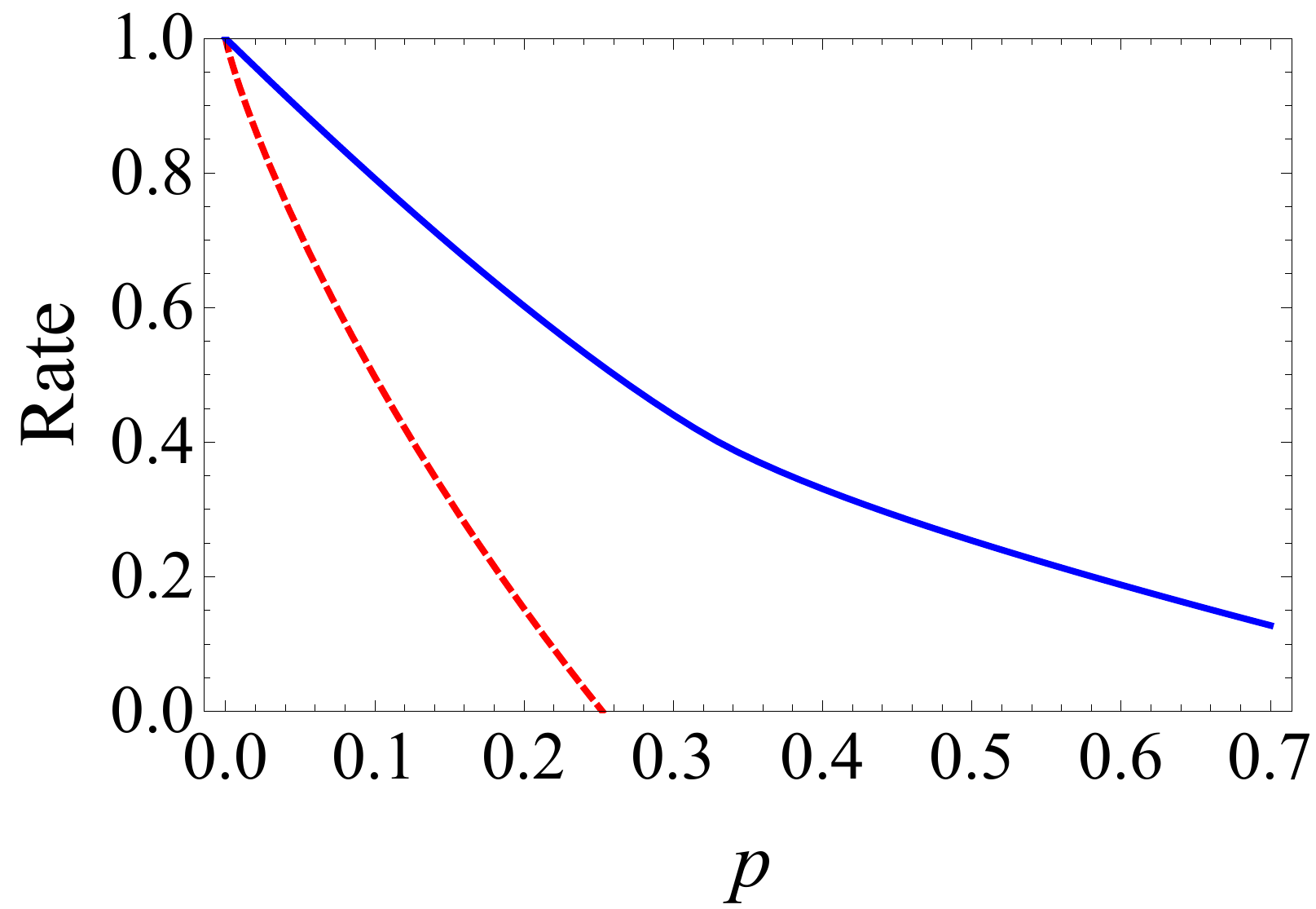}%
\caption{Upper and lower bounds on $Q_{2}$ and $P_{2}$ for a qubit
depolarizing channel. The dashed curve is both the direct and reverse coherent
information, while the solid curve is the squashed entanglement upper bound.}%
\label{fig:depolarizing}%
\end{center}
\end{figure}

\section{Application to bosonic channels}

\subsection{Pure-loss channel}

As a final contribution, we apply our bounds to the practically relevant
pure-loss bosonic channel, which is modeled by the following Heisenberg
picture evolution:%
\begin{equation}
\hat{b}=\sqrt{\eta}\hat{a}+\sqrt{1-\eta}\hat{e},
\label{eq:pure-loss-Heisenberg}%
\end{equation}
where $\hat{a}$, $\hat{b}$, and $\hat{e}$ are the electromagnetic field mode
operators corresponding to the sender's input, the receiver's output, and the
environmental input. For the pure-loss bosonic channel, the environment
injects the vacuum state. The parameter $\eta\in\left[  0,1\right]  $
characterizes the transmissivity of the channel, i.e., the fraction of input
photons that make it to the output on average. Let $\mathcal{N}_{\eta}$ denote
the channel to the receiver.

For a classical-communication-assisted quantum communication protocol or
secret-key agreement protocol over such a channel, we assume that it begins
and ends with finite-dimensional states, but the processing between the first
and final step can be with infinite-dimensional systems.\footnote{That is,
their objective is to generate a maximally entangled state $\left\vert
\Phi\right\rangle _{AB}$\ or a finite number of secret key bits, and they do
so by Alice encoding a finite-dimensional quantum state into an
infinite-dimensional system and the final step of the protocol has them
truncate their systems to be of finite dimension. In this way, the continuity
inequality in the proof of Theorem~\ref{thm:SE-upp-bnd} safely applies and all
of the other steps in between involve only the quantum data processing
inequality, which has been proven to hold in the general infinite-dimensional
setting \cite{U77}.} Furthermore, we impose a mean photon number constraint at
the input of each channel, i.e., for each channel input, we have the
constraint that $\left\langle \hat{a}^{\dag}\hat{a}\right\rangle \leq N_{S}$
for some $N_{S}$ such that $0\leq N_{S}<\infty$. Thus, $E_{\text{sq}}\left(
\mathcal{N}_{\eta}\right)  $ with the additional photon number constraint on
the channel input is an upper bound on both $Q_{2}\left(  \mathcal{N}_{\eta
}\right)  $ and $P_{2}\left(  \mathcal{N}_{\eta}\right)  $. By taking the
squashing channel for the environment to be another pure-loss bosonic channel
of transmissivity $\eta_{1}\in\left[  0,1\right]  $, noting that the resulting
conditional mutual information can be written as a sum of two conditional
entropies as in Lemma~\ref{lem:alt-char-SE}, and applying the extremality of
Gaussian states with respect to conditional entropies \cite{EW07,WGC06}, we
find the following upper bounds on $E_{\text{sq}}\left(  \mathcal{N}_{\eta
}\right)  $ for all $\eta_{1}\in\left[  0,1\right]  $ (see
Appendix~\ref{app:first} for a detailed proof):%
\begin{multline}
\tfrac{1}{2}\Big[g\left(  \left(  1-\eta_{1}+\eta\eta_{1}\right)
N_{S}\right)  +g\left(  \left(  \eta_{1}+\eta\left(  1-\eta_{1}\right)
\right)  N_{S}\right)  
-g\left(  \eta_{1}\left(  1-\eta\right)  N_{S}\right)  -g\left(  \left(
1-\eta_{1}\right)  \left(  1-\eta\right)  N_{S}\right)  \Big], \label{eq:bosonic-upper-bounds}
\end{multline}
where $g\left(  x\right)  \equiv\left(  x+1\right)  \log_{2}\left(
x+1\right)  -x\log_{2}x$ is the entropy of a bosonic, circularly-symmetric
thermal state with mean photon number $x$. The function in
(\ref{eq:bosonic-upper-bounds}) is symmetric and convex in $\eta_{1}$ (see
Appendix~\ref{app:convexity}), so that its minimum occurs at $\eta_{1}=1/2$,
leading to the following simpler upper bound:%
\[
g\left(  \left(  1+\eta\right)  N_{S}/2\right)  -g\left(  \left(
1-\eta\right)  N_{S}/2\right)  .
\]
By taking the limit of this upper bound as $N_{S}\rightarrow\infty$, we
recover the following photon-number independent upper bound on the capacities
$Q_{2}(\mathcal{N}_{\eta})$ and $P_{2}(\mathcal{N}_{\eta})$:%
\begin{equation}
\log_{2}\left(  \frac{1+\eta}{1-\eta}\right)  .\label{eq:good-upper-bound}%
\end{equation}
For values of $\eta\ll1$ (which we expect in practical scenarios with high
loss), this upper bound is close to the following lower bound on
$Q_{2}(\mathcal{N}_{\eta})$ and $P_{2}(\mathcal{N}_{\eta})$ established in
\cite{GPLS09,PGBL09}:%
\begin{equation}
\log_{2}\left(  \frac{1}{1-\eta}\right)  .\label{eq:lower-bound-q2-p2}%
\end{equation}
Thus, for such small $\eta$, our upper bound demonstrates that the protocols
from \cite{GPLS09,PGBL09} achieving the lower bound in
(\ref{eq:lower-bound-q2-p2}) are nearly optimal.
Figure~\ref{fig:pure-loss} plots these bounds.
\begin{figure}
[ptb]
\begin{center}
\includegraphics[
width=3.4904in
]%
{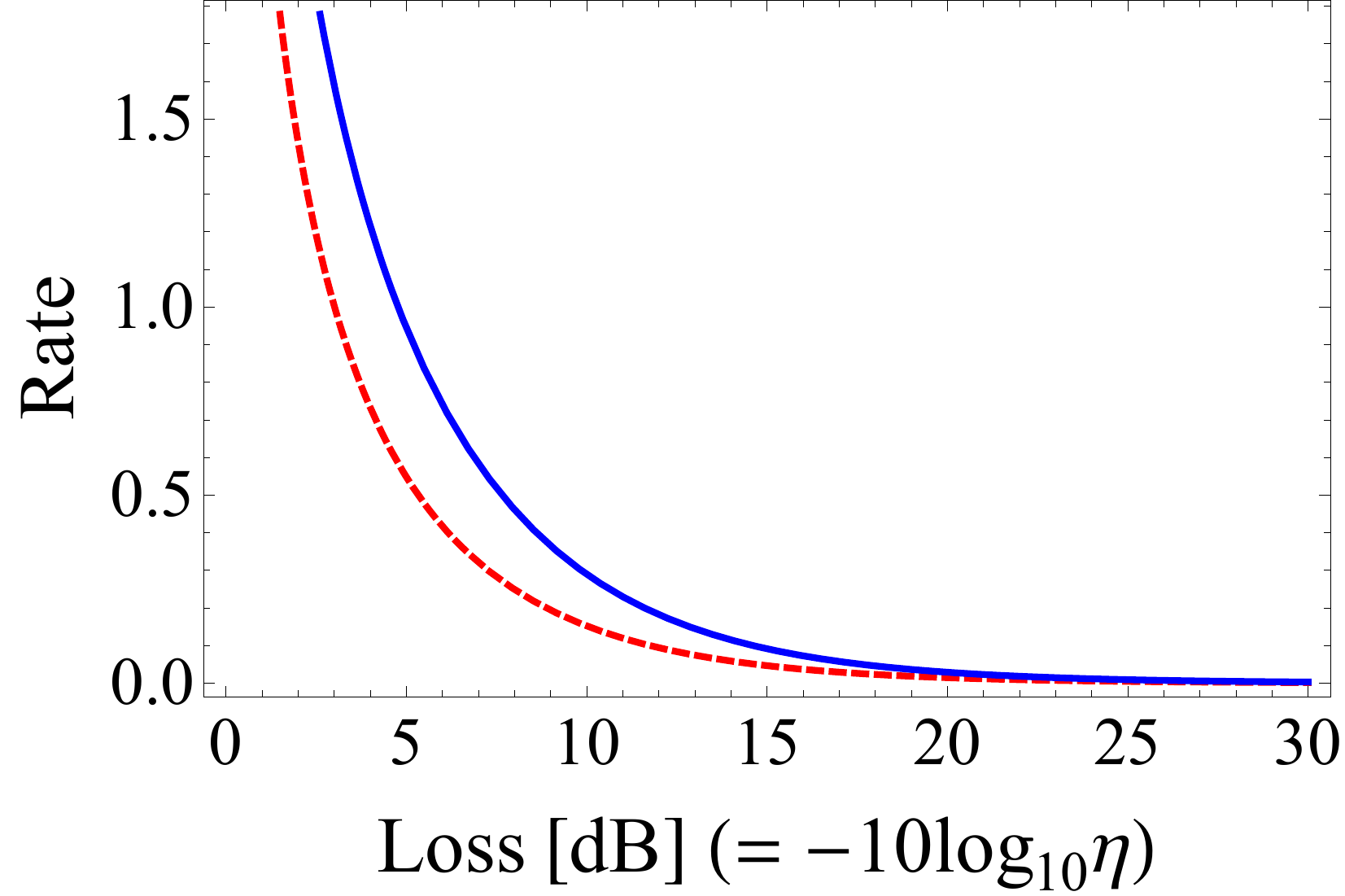}%
\caption{Upper and lower bounds on $Q_{2}$ and $P_{2}$ for a pure-loss
bosonic channel. The dashed curve is both the direct and reverse coherent information,
while the solid curve is the squashed entanglement upper bound.}%
\label{fig:pure-loss}%
\end{center}
\end{figure}

\begin{remark}
In \cite{HolevoWerner2001}, Holevo and Werner proved by a different approach
that $\log_{2}\left(  \left(  1+\eta\right)  /\left(  1-\eta\right)  \right)
$ serves as an upper bound on the unassisted quantum capacity of a pure-loss
bosonic channel with transmissivity $\eta\in\left[  0,1\right]  $. However, it
is not clear to us that their method generalizes to yield an upper bound on
the quantum capacity assisted by unlimited forward and backward classical
communication. Furthermore, in light of later results \cite{WPG07}\ which
established $\log_{2}\left(  \eta/\left(  1-\eta\right)  \right)  $ as an
upper bound on the unassisted quantum capacity, it is clear that the
Holevo-Werner bound is not tight.
\end{remark}

\subsection{Phase-insensitive Gaussian channels}

In this section, we find bounds on $Q_{2}\left(  \mathcal{N}\right)  $ and
$P_{2}\left(  \mathcal{N}\right)  $ whenever $\mathcal{N}$ is a
phase-insensitive Gaussian channel \cite{WPGCRSL12,PhysRevLett.108.110505},
meaning that it adds an equal amount of noise to each quadrature of the
electromagnetic field. Examples of these channels include the pure-loss
channel, the thermal channel, the additive noise channel, and the
phase-insensitive amplifier channel.

Such channels have the following action on the mean vector $x$\ and covariance
matrix $\Gamma$\ of a given single-mode, bosonic quantum state
\cite{WPGCRSL12}:%
\begin{align}
x  &  \rightarrow Kx,\label{eq:PI-channel-1}\\
\Gamma &  \rightarrow K\Gamma K^{T}+N, \label{eq:PI-channel-2}%
\end{align}
where $K$ and $N$ are square matrices satisfying%
\begin{align}
N  &  \geq0,\\
\det N  &  \geq\left(  \det K-1\right)  ^{2},
\end{align}
in order for the map to be a legitimate completely positive and trace
preserving map. A phase-insensitive channel has%
\begin{align}
K  &  =\text{diag}\left(  \sqrt{\tau},\sqrt{\tau}\right)  ,\\
N  &  =\text{diag}\left(  \nu,\nu\right)  , \label{eq:PI-channel-last}%
\end{align}
where $\tau\in\left[  0,1\right]  $ corresponds to attenuation, $\tau\geq1$
amplification, and $\nu$ is the variance of an additive noise.

A powerful (albeit simple) theorem in continuous-variable quantum information
is that any phase-insensitive Gaussian channel as given above can be
decomposed as the concatenation of a pure-loss channel $\mathcal{L}_{T}$\ with
loss parameter $T$ followed by an amplifier channel $\mathcal{A}_{G}$\ with
gain $G$, i.e.,%
\begin{equation}
\mathcal{N=\mathcal{A}}_{G}\circ\mathcal{L}_{T},\label{eq:decomp-gaussian}%
\end{equation}
where $\mathcal{N}$ is given by (\ref{eq:PI-channel-1}%
)-(\ref{eq:PI-channel-last}) and $T$ and $G$ are chosen such that $\tau=TG$
and $\nu=G\left(  1-T\right)  +G-1$ \cite{CGH06,PhysRevLett.108.110505}. These
equations are equivalent to $T=2\tau/\left(  \tau+\nu+1\right)  $ and
$G=\left(  \tau+\nu+1\right)  /2$.

Now consider that $E_{\text{sq}}\left(  \mathcal{N}\right)  =E_{\text{sq}%
}\left(  \mathcal{\mathcal{A}}_{G}\circ\mathcal{L}_{T}\right)  \leq
E_{\text{sq}}\left(  \mathcal{L}_{T}\right)  $, where the inequality follows
from quantum data processing (the quantum conditional mutual information does
not increase under processing of one of the systems that is not the
conditioning system---see the proof of Proposition~3\ of \cite{CW04}).
Combining this fact with the bound for the pure-loss channel from the previous
section, we find the following upper bounds on $Q_{2}\left(  \mathcal{N}%
\right)  $ and $P_{2}\left(  \mathcal{N}\right)  $:%
\begin{equation}
Q_{2}\left(  \mathcal{N}\right)  ,P_{2}\left(  \mathcal{N}\right)  \leq
\log_{2}\left(  \frac{1+T}{1-T}\right)  .
\end{equation}

We can specialize the above result to the case of a thermal channel and an
additive noise channel. The evolution for the thermal channel is the same as
that in (\ref{eq:pure-loss-Heisenberg}), with the exception that the
environment is prepared in a thermal state of mean photon number $N_{B}\geq0$.
The decomposition for the thermal channel then corresponds to that
in\ (\ref{eq:decomp-gaussian}), with%
\begin{align*}
T &  =\frac{\eta}{\left(  1-\eta\right)  N_{B}+1},\\
G &  =\left(  1-\eta\right)  N_{B}+1.
\end{align*}
This is because for the thermal channel, we have $\tau=\eta$ and $\nu=\left(
1-\eta\right)  \left(  2N_{B}+1\right)  $. Thus, we find the following upper
bound on $Q_{2}$ and $P_{2}$ for the thermal channel:%
\begin{equation}
\log\left(  \frac{1+\frac{\eta}{\left(  1-\eta\right)  N_{B}+1}}{1-\frac{\eta
}{\left(  1-\eta\right)  N_{B}+1}}\right)  =\log\left(  \frac{\left(
1-\eta\right)  N_{B}+1+\eta}{\left(  1-\eta\right)  N_{B}+1-\eta}\right)
.\label{eq:thermal-channel-upper-bound}%
\end{equation}
The additive noise channel corresponds to the following map:%
\[
\rho\rightarrow\int d^{2}\alpha\ \frac{1}{\pi\overline{n}}\exp\left\{
-\left\vert \alpha\right\vert ^{2}/\overline{n}\right\}  D\left(
\alpha\right)  \rho D^{\dag}\left(  \alpha\right)  ,
\]
where $D\left(  \alpha\right)  $ is a unitary displacement operator and
$\overline{n}>0$ is the noise variance \cite{WPGCRSL12}. It is well known that
the additive noise channel is equivalent to a thermal channel for which
$N_{B}\rightarrow\infty$ and $\eta\rightarrow1$, while $\left(
1-\eta\right)  N_{B}\rightarrow\overline{n}$ \cite{GGLMS04}. From this
relation, we immediately obtain the following upper bound on both $Q_{2}$ and
$P_{2}$ for an additive noise channel:%
\begin{equation}
\log\left(  \frac{\overline{n}+2}{\overline{n}}\right)  .
\end{equation}

\section{Conclusion}

We have established the squashed entanglement of a quantum channel as an
operationally relevant, well behaved information measure for quantum channels.
Our work here finds application in quantum key distribution, establishing the
first simple upper bound on the rate that is achievable over the pure-loss
bosonic channel, which models pure-loss free-space and fiber-optic
communication. When the environment mode is noisy (for example, in a thermal
state)---which is the case for a quantum key distribution setting for instance
when the eavesdropper makes an active attack---the secret-key rate reduces
from the case of the passive eavesdropper. In this case, the upper bound in
(\ref{eq:thermal-channel-upper-bound}) serves as a general upper bound to the
rate achievable over a repeater-less lossy channel, using any key distribution protocol.

An interesting open question is to establish $E_{\text{sq}}(\mathcal{N})$ as a
strong converse rate for $Q_{2}(\mathcal{N})$ and $P_{2}(\mathcal{N})$
(Theorem~\ref{thm:SE-upp-bnd} establishes $E_{\text{sq}}(\mathcal{N})$ as a
weak converse rate). A strong converse rate is defined to be such that if a
communication scheme exceeds it, then the error probability converges to one
as the number of channel uses becomes large. In this sense, establishing
$E_{\text{sq}}(\mathcal{N})$ as a strong converse rate for $Q_{2}%
(\mathcal{N})$ and $P_{2}(\mathcal{N})$ would significantly enhance
Theorems~\ref{thm:SE-upp-bnd} and \ref{thm:SE-upp-bnd-priv} given here. For
proving this, some combination of the ideas in \cite{BBCW13,O08} might be
helpful. 

\bigskip

\textbf{Acknowledgements}. We are grateful to Francesco Buscemi, Seth Lloyd,
Cosmo Lupo, and Andreas Winter for insightful discussions. We also acknowledge
Mark Byrd, Eric Chitambar, and the other participants of the Boris Musulin
Workshop on Open Quantum Systems and Information Processing for helpful
feedback. Finally, we thank Bob Tucci for kindly pointing us to his related
work on squashed entanglement. This research was supported by the DARPA
Quiness Program through US Army Research Office award W31P4Q-12-1-0019.

\appendix

\section{Squashed entanglement upper bound for the pure-loss bosonic channel}

\label{app:first}Here we detail a proof that (\ref{eq:bosonic-upper-bounds})
is an upper bound on $Q_{2}\left(  \mathcal{N}_{\eta}\right)  $ and
$P_{2}\left(  \mathcal{N}_{\eta}\right)  $, where $\mathcal{N}_{\eta}$ is a
pure-loss bosonic channel with transmissivity $\eta\in\left[  0,1\right]  $.

As mentioned before, we need to consider only pure states $\left\vert
\phi\right\rangle _{AA^{\prime}}$ when optimizing the squashed entanglement of
a quantum channel. Let $U_{E\rightarrow E^{\prime}F}^{\mathcal{S}}$ be an
isometric extension of Eve's squashing channel $\mathcal{S}_{E\rightarrow
E^{\prime}}$. Let $|\psi\rangle_{ABE^{\prime}F}\equiv U_{E\rightarrow
E^{\prime}F}^{\mathcal{S}}U_{A^{\prime}\rightarrow BE}^{\mathcal{N}}\left\vert
\phi\right\rangle _{AA^{\prime}}$, so that, $\mathrm{Tr}_{F}[|\phi
\rangle\langle\phi|_{ABE^{\prime}F}]=\mathcal{S}_{E\rightarrow E^{\prime}%
}\circ U_{A^{\prime}\rightarrow BE}^{\mathcal{N}}(\phi_{AA^{\prime}})$. Then
according to Lemma~\ref{lem:alt-char-SE}%
\begin{equation}
\sup_{\phi_{AA^{\prime}}}E_{\text{sq}}(A;B)_{\mathcal{N}_{A^{\prime
}\rightarrow B}(\phi_{AA^{\prime}})}=\sup_{\phi_{AA^{\prime}}}\frac{1}{2}%
\inf_{\mathcal{S}_{E\rightarrow E^{\prime}}}\left(  H(B|E^{\prime})_{\psi
}+H(B|F)_{\psi}\right)  .\label{eq:squashed_entanglement_pure_input}%
\end{equation}
Now suppose that Alice and Bob are connected by a pure-loss bosonic channel
with transmissivity $\eta$. It is not necessarily an easy task to optimize
Eve's squashing channel $\mathcal{S}$. Instead, we consider a specific
squashing channel: a pure-loss bosonic channel $\mathcal{L}_{\eta_{1}}$ with
transmissivity $\eta_{1}$. As shown in Lemma~\ref{lem:alt-char-SE}, the
squashed entanglement can be written as a sum of two conditional entropies,
each of which is a function of the reduced state Tr$_{A}\left\{
\phi_{AA^{\prime}}\right\}  $ on $A^{\prime}$. Since the overall channel from
$A^{\prime}$ to $BE^{\prime}$ is Gaussian and the overall channel from
$A^{\prime}$ to $BF$ is Gaussian and due to the photon-number constraint at
the input, it follows from the extremality of Gaussian states for conditional
entropy \cite{EW07,WGC06}\ that a thermal state on $A^{\prime}$ of mean photon
number $N_{S}$ maximizes both of these quantities. With this and the fact that
$\phi_{AA^{\prime}}$ is a pure state, we can conclude that the optimal
$\phi_{AA^{\prime}}$ is a two-mode squeezed vacuum (TMSV) state. Let $N_{S}$
be the average photon number of one share of the TMSV. Then the covariance
matrix of the reduced thermal state at $A^{\prime}$ is given by
\[
\gamma^{A^{\prime}}=\left[
\begin{array}
[c]{cc}%
1+2N_{S} & 0\\
0 & 1+2N_{S}%
\end{array}
\right]  .
\]
Note that the covariance matrix is defined such that a vacuum state (or
coherent state) is described by an identity matrix. Therefore a covariance
matrix of the initial state in system $A^{\prime}E^{\prime}F$ is given by
$\gamma^{A^{\prime}}\oplus I^{E^{\prime}}\oplus I^{F}$. The beamsplitting
operations are given by the transformation
\[
\gamma^{A^{\prime}}\oplus I^{E^{\prime}}\oplus I^{F}\rightarrow S_{\eta_{1}%
}S_{\eta}\left(  \gamma^{A^{\prime}}\oplus I^{E^{\prime}}\oplus I^{F}\right)
S_{\eta}^{T}S_{\eta_{1}}^{T},
\]
where
\[
S_{\eta}=\left[
\begin{array}
[c]{ccc}%
\sqrt{\eta} & \sqrt{1-\eta} & 0\\
-\sqrt{1-\eta} & \sqrt{\eta} & 0\\
0 & 0 & 1
\end{array}
\right]  ^{\oplus2},\quad S_{\eta_{1}}=\left[
\begin{array}
[c]{ccc}%
1 & 0 & 0\\
0 & \sqrt{\eta_{1}} & \sqrt{1-\eta_{1}}\\
0 & -\sqrt{1-\eta_{1}} & \sqrt{\eta_{1}}%
\end{array}
\right]  ^{\oplus2},
\]
(the superscript \textquotedblleft$\oplus2$\textquotedblright\ means that the
same matrix is applied to both $x$ and $p$ quadratures. Because of the
symmetry of the state and the beamsplitter operation in phase space, basically
we need to consider only one quadrature.) This transformation is easily
calculated and we get a covariance matrix for the state $\mathrm{Tr}%
_{A}\left\{  |\phi\rangle\langle\phi|_{ABE^{\prime}F}\right\}  $:
\begin{multline*}
S_{\eta_{1}}S_{\eta}\left(  \gamma^{A^{\prime}}\oplus I^{E^{\prime}}\oplus
I^{F}\right)  S_{\eta_{1}}^{T}S_{\eta}^{T}=\\
\left[
\begin{array}
[c]{ccc}%
1+\eta2N_{S} & -\sqrt{\eta(1-\eta)}\sqrt{\eta_{1}}2N_{S} & \sqrt{\eta(1-\eta
)}\sqrt{1-\eta_{1}}2N_{S}\\
-\sqrt{\eta(1-\eta)}\sqrt{\eta_{1}}2N_{S} & 1+(1-\eta)\eta_{1}2N_{S} &
-(1-\eta)\sqrt{\eta_{1}(1-\eta_{1})}2N_{S}\\
\sqrt{\eta(1-\eta)}\sqrt{1-\eta_{1}}2N_{S} & -(1-\eta)\sqrt{\eta_{1}%
(1-\eta_{1})}2N_{S} & 1+(1-\eta)(1-\eta_{1})2N_{S}%
\end{array}
\right]  ^{\oplus2}.
\end{multline*}
It immediately implies a covariance matrix of the marginal state on
$E^{\prime}$:
\[
\gamma_{E^{\prime}}=\left[
\begin{array}
[c]{cc}%
1+(1-\eta)\eta_{1}2N_{S} & 0\\
0 & 1+(1-\eta)\eta_{1}2N_{S}%
\end{array}
\right]  ,
\]
which is the covariance matrix for a thermal state with photon number
$(1-\eta)\eta_{1}N_{S}$. Thus we have
\[
H(E^{\prime})=g\left(  (1-\eta)\eta_{1}N_{S}\right)  ,
\]
where $g(x)=(1+x)\log(1+x)-x\log x$. Similarly, we get
\[
H(F)=g\left(  (1-\eta)(1-\eta_{1})N_{S}\right)  .
\]
The other entropies $H(BE^{\prime})$ and $H(BF)$ are also obtained by
considering the corresponding submatrices and diagonalizing them. Then we can
find
\begin{align}
H(BE^{\prime}) &  =g\left(  \{\eta+(1-\eta)\eta_{1}\}N_{S}\right)  \nonumber\\
H(BF) &  =g\left(  \{\eta+(1-\eta)(1-\eta_{1})\}N_{S}\right)
\end{align}
As a consequence, we obtain the upper bound,%
\begin{align}
Q_{2}(\mathcal{N}_{\eta}) &  \leq\min_{\eta_{1}}\frac{1}{2}\big\{g\left(
\{\eta+(1-\eta)\eta_{1}\}N_{S}\right)  -g\left(  (1-\eta)\eta_{1}N_{S}\right)
\nonumber\\
&  \ \ \ \ \ \ \ \ +g\left(  \{\eta+(1-\eta)(1-\eta_{1})\}N_{S}\right)
-g\left(  (1-\eta)(1-\eta_{1})N_{S}\right)
\big\}\label{eq:lossy_bosonic_upper_bound1}\\
&  =g\left(  (1+\eta)N_{S}/2\right)  -g\left(  (1-\eta)N_{S}/2\right)  .
\end{align}
The minimal value is achieved by $\eta_{1}=1/2$ because the function is
symmetric and convex in $\eta_{1}$ (with convexity checked by computing the
second derivative, see next appendix). The expression $g\left(  (1+\eta
)N_{S}/2\right)  -g\left(  (1-\eta)N_{S}/2\right)  $ converges to $\log
(1+\eta)/(1-\eta)$ as $N_{S}\rightarrow\infty$.

\subsection{Convexity in $\eta_{1}$}

\label{app:convexity}We compute the second derivative of the function in
(\ref{eq:lossy_bosonic_upper_bound1}) in order to establish that it is convex.
The function is%
\[
g\left(  \left(  \eta+\left(  1-\eta\right)  \left(  1-\eta_{1}\right)
\right)  N\right)  +g\left(  \left(  \eta+\left(  1-\eta\right)  \eta
_{1}\right)  N\right)  -g\left(  \eta_{1}\left(  1-\eta\right)  N\right)
-g\left(  \left(  1-\eta_{1}\right)  \left(  1-\eta\right)  N\right)
\]

We now compute the second derivative of each term.

Consider that $g\left(  \left(  \eta+\left(  1-\eta\right)  \left(  1-\eta
_{1}\right)  \right)  N\right)  $ is equal to%
\begin{multline*}
\left(  \left(  \eta+\left(  1-\eta\right)  \left(  1-\eta_{1}\right)
\right)  N+1\right)  \log\left(  \left(  \eta+\left(  1-\eta\right)  \left(
1-\eta_{1}\right)  \right)  N+1\right) \\
-\left(  \left(  \eta+\left(  1-\eta\right)  \left(  1-\eta_{1}\right)
\right)  N\right)  \log\left(  \left(  \eta+\left(  1-\eta\right)  \left(
1-\eta_{1}\right)  \right)  N\right)  .
\end{multline*}
The first derivative of the above with respect to $\eta_{1}$ is given by%
\begin{multline*}
-\left(  1-\eta\right)  N\log\left(  \left(  \eta+\left(  1-\eta\right)
\left(  1-\eta_{1}\right)  \right)  N+1\right)  -\left(  1-\eta\right)  N\\
+\left(  1-\eta\right)  N\log\left(  \left(  \eta+\left(  1-\eta\right)
\left(  1-\eta_{1}\right)  \right)  N\right)  +\left(  1-\eta\right)  N\\
=\left(  1-\eta\right)  N\left[  -\log\left(  \left(  \eta+\left(
1-\eta\right)  \left(  1-\eta_{1}\right)  \right)  N+1\right)  +\log\left(
\left(  \eta+\left(  1-\eta\right)  \left(  1-\eta_{1}\right)  \right)
N\right)  \right]
\end{multline*}
The second derivative with respect to $\eta_{1}$ is then given by%
\begin{align*}
&  \left(  1-\eta\right)  N\left[  \frac{\left(  1-\eta\right)  N}{\left(
\left(  \eta+\left(  1-\eta\right)  \left(  1-\eta_{1}\right)  \right)
N+1\right)  }-\frac{\left(  1-\eta\right)  N}{\left(  \left(  \eta+\left(
1-\eta\right)  \left(  1-\eta_{1}\right)  \right)  N\right)  }\right] \\
&  =-\left[  \left(  1-\eta\right)  N\right]  ^{2}\left[  \frac{1}{\left[
\left(  \eta+\left(  1-\eta\right)  \left(  1-\eta_{1}\right)  \right)
N+1\right]  \ \left[  \left(  \eta+\left(  1-\eta\right)  \left(  1-\eta
_{1}\right)  \right)  N\right]  }\right]  .
\end{align*}

Consider that $g\left(  \left(  \eta+\left(  1-\eta\right)  \eta_{1}\right)
N\right)  $ is equal to%
\[
\left(  \left(  \eta+\left(  1-\eta\right)  \eta_{1}\right)  N+1\right)
\log\left(  \left(  \eta+\left(  1-\eta\right)  \eta_{1}\right)  N+1\right)
-\left(  \left(  \eta+\left(  1-\eta\right)  \eta_{1}\right)  N\right)
\log\left(  \left(  \eta+\left(  1-\eta\right)  \eta_{1}\right)  N\right)  .
\]
The first derivative of the above with respect to $\eta_{1}$ is given by%
\begin{multline*}
\left(  1-\eta\right)  N\log\left(  \left(  \eta+\left(  1-\eta\right)
\eta_{1}\right)  N+1\right)  +\left(  1-\eta\right)  N\\
-\left(  1-\eta\right)  N\log\left(  \left(  \eta+\left(  1-\eta\right)
\eta_{1}\right)  N\right)  -\left(  1-\eta\right)  N\\
=\left(  1-\eta\right)  N\left[  \log\left(  \left(  \eta+\left(
1-\eta\right)  \eta_{1}\right)  N+1\right)  -\log\left(  \left(  \eta+\left(
1-\eta\right)  \eta_{1}\right)  N\right)  \right]
\end{multline*}
The second derivative with respect to $\eta_{1}$ is then given by%
\begin{align*}
&  \left(  1-\eta\right)  N\left[  \frac{\left(  1-\eta\right)  N}{\left(
\left(  \eta+\left(  1-\eta\right)  \eta_{1}\right)  N+1\right)  }%
-\frac{\left(  1-\eta\right)  N}{\left(  \left(  \eta+\left(  1-\eta\right)
\eta_{1}\right)  N\right)  }\right]  \\
&  =-\left[  \left(  1-\eta\right)  N\right]  ^{2}\left[  \frac{1}{\left[
\left(  \eta+\left(  1-\eta\right)  \eta_{1}\right)  N+1\right]  \ \left[
\left(  \eta+\left(  1-\eta\right)  \eta_{1}\right)  N\right]  }\right]  .
\end{align*}

Consider that $g\left(  \left(  1-\eta\right)  \eta_{1}N\right)  $ is equal to%
\[
\left(  \left(  \left(  1-\eta\right)  \eta_{1}\right)  N+1\right)
\log\left(  \left(  \left(  1-\eta\right)  \eta_{1}\right)  N+1\right)
-\left(  \left(  \left(  1-\eta\right)  \eta_{1}\right)  N\right)  \log\left(
\left(  \left(  1-\eta\right)  \eta_{1}\right)  N\right)  .
\]
The first derivative of the above with respect to $\eta_{1}$ is given by%
\begin{multline*}
\left(  1-\eta\right)  N\log\left(  \left(  \left(  1-\eta\right)  \eta
_{1}\right)  N+1\right)  +\left(  1-\eta\right)  N\\
-\left(  1-\eta\right)  N\log\left(  \left(  \left(  1-\eta\right)  \eta
_{1}\right)  N\right)  -\left(  1-\eta\right)  N\\
=\left(  1-\eta\right)  N\left[  \log\left(  \left(  \left(  1-\eta\right)
\eta_{1}\right)  N+1\right)  -\log\left(  \left(  \left(  1-\eta\right)
\eta_{1}\right)  N\right)  \right]
\end{multline*}
The second derivative with respect to $\eta_{1}$ is then given by%
\begin{align*}
&  \left(  1-\eta\right)  N\left[  \frac{\left(  1-\eta\right)  N}{\left(
\left(  \left(  1-\eta\right)  \eta_{1}\right)  N+1\right)  }-\frac{\left(
1-\eta\right)  N}{\left(  \left(  \left(  1-\eta\right)  \eta_{1}\right)
N\right)  }\right]  \\
&  =-\left[  \left(  1-\eta\right)  N\right]  ^{2}\left[  \frac{1}{\left[
\left(  \left(  1-\eta\right)  \eta_{1}\right)  N+1\right]  \ \left[  \left(
\left(  1-\eta\right)  \eta_{1}\right)  N\right]  }\right]  .
\end{align*}

Consider that $g\left(  \left(  \left(  1-\eta\right)  \left(  1-\eta
_{1}\right)  \right)  N\right)  $ is equal to%
\[
\left(  \left(  \left(  1-\eta\right)  \left(  1-\eta_{1}\right)  \right)
N+1\right)  \log\left(  \left(  \left(  1-\eta\right)  \left(  1-\eta
_{1}\right)  \right)  N+1\right)  -\left(  \left(  \left(  1-\eta\right)
\left(  1-\eta_{1}\right)  \right)  N\right)  \log\left(  \left(  \left(
1-\eta\right)  \left(  1-\eta_{1}\right)  \right)  N\right)  .
\]
The first derivative of the above with respect to $\eta_{1}$ is given by%
\begin{multline*}
-\left(  1-\eta\right)  N\log\left(  \left(  \left(  1-\eta\right)  \left(
1-\eta_{1}\right)  \right)  N+1\right)  -\left(  1-\eta\right)  N\\
+\left(  1-\eta\right)  N\log\left(  \left(  \left(  1-\eta\right)  \left(
1-\eta_{1}\right)  \right)  N\right)  +\left(  1-\eta\right)  N\\
=\left(  1-\eta\right)  N\left[  -\log\left(  \left(  \left(  1-\eta\right)
\left(  1-\eta_{1}\right)  \right)  N+1\right)  +\log\left(  \left(  \left(
1-\eta\right)  \left(  1-\eta_{1}\right)  \right)  N\right)  \right]
\end{multline*}
The second derivative with respect to $\eta_{1}$ is then given by%
\begin{align*}
&  \left(  1-\eta\right)  N\left[  \frac{\left(  1-\eta\right)  N}{\left(
\left(  \left(  1-\eta\right)  \left(  1-\eta_{1}\right)  \right)  N+1\right)
}-\frac{\left(  1-\eta\right)  N}{\left(  \left(  \left(  1-\eta\right)
\left(  1-\eta_{1}\right)  \right)  N\right)  }\right]  \\
&  =-\left[  \left(  1-\eta\right)  N\right]  ^{2}\left[  \frac{1}{\left[
\left(  \left(  1-\eta\right)  \left(  1-\eta_{1}\right)  \right)  N+1\right]
\ \left[  \left(  \left(  1-\eta\right)  \left(  1-\eta_{1}\right)  \right)
N\right]  }\right]  .
\end{align*}

So now we just need to determine whether the second derivative is positive:%
\begin{align*}
&  -\left[  \left(  1-\eta\right)  N\right]  ^{2}\left[  \frac{1}{\left[
\left(  \eta+\left(  1-\eta\right)  \left(  1-\eta_{1}\right)  \right)
N+1\right]  \ \left[  \left(  \eta+\left(  1-\eta\right)  \left(  1-\eta
_{1}\right)  \right)  N\right]  }\right] \\
&  -\left[  \left(  1-\eta\right)  N\right]  ^{2}\left[  \frac{1}{\left[
\left(  \eta+\left(  1-\eta\right)  \eta_{1}\right)  N+1\right]  \ \left[
\left(  \eta+\left(  1-\eta\right)  \eta_{1}\right)  N\right]  }\right] \\
&  +\left[  \left(  1-\eta\right)  N\right]  ^{2}\left[  \frac{1}{\left[
\left(  \left(  1-\eta\right)  \eta_{1}\right)  N+1\right]  \ \left[  \left(
\left(  1-\eta\right)  \eta_{1}\right)  N\right]  }\right] \\
&  +\left[  \left(  1-\eta\right)  N\right]  ^{2}\left[  \frac{1}{\left[
\left(  \left(  1-\eta\right)  \left(  1-\eta_{1}\right)  \right)  N+1\right]
\ \left[  \left(  \left(  1-\eta\right)  \left(  1-\eta_{1}\right)  \right)
N\right]  }\right]  .
\end{align*}
This simplifies to%
\begin{align*}
&  \left[  \frac{1}{\left[  \left(  \left(  1-\eta\right)  \eta_{1}\right)
N+1\right]  \ \left[  \left(  \left(  1-\eta\right)  \eta_{1}\right)
N\right]  }\right]  +\\
&  \left[  \frac{1}{\left[  \left(  \left(  1-\eta\right)  \left(  1-\eta
_{1}\right)  \right)  N+1\right]  \ \left[  \left(  \left(  1-\eta\right)
\left(  1-\eta_{1}\right)  \right)  N\right]  }\right] \\
&  \geq\left[  \frac{1}{\left[  \left(  \eta+\left(  1-\eta\right)  \eta
_{1}\right)  N+1\right]  \ \left[  \left(  \eta+\left(  1-\eta\right)
\eta_{1}\right)  N\right]  }\right]  +\\
&  \left[  \frac{1}{\left[  \left(  \eta+\left(  1-\eta\right)  \left(
1-\eta_{1}\right)  \right)  N+1\right]  \ \left[  \left(  \eta+\left(
1-\eta\right)  \left(  1-\eta_{1}\right)  \right)  N\right]  }\right]
\end{align*}
This last inequality is true by inspection because the terms on the RHS\ are
the same as those on the LHS, only with an extra factor of $\eta\geq0$ in the denominator.

\bibliographystyle{alpha}
\bibliography{Ref}

\end{document}